\algnewcommand{\LeftComment}[1]{\Statex \(\triangleright\) #1}
\newtheorem{theorem}{\textbf{Theorem}}[section]
\newtheorem{lemma}{\textbf{Lemma}}[section]
\newtheorem{remark}{\textit{Remark}}
\DeclareAcronym{RFS}{
	short = RFS,
	long = Random Finite Sets
}
\DeclareAcronym{PMBM}{
	short = PMBM,
	long = Poisson Multi-Bernoulli Mixture
}
\DeclareAcronym{LMB}{
	short = LMB,
	long = Labeled Multi-Bernoulli
}
\DeclareAcronym{GLMB}{
	short = GLMB,
	long = Generalized Labeled Multi-Bernoulli
}
\DeclareAcronym{JPDA}{
	short = JPDA,
	long = Joint Probabilistic Data Association
}
\DeclareAcronym{MHT}{
	short = MHT,
	long = Multiple Hypothesis Tracking
}
\DeclareAcronym{BP}{
	short = BP,
	long = Belief Propagation
}
\DeclareAcronym{GNN}{
	short = GNN,
	long = Global Nearest Neighbor
}
\DeclareAcronym{FISST}{
	short = FISST,
	long = Finite Set Statistics
}
\DeclareAcronym{PHD}{
	short = PHD,
	long = Probability Hypothesis Density
}
\DeclareAcronym{CPHD}{
	short = CPHD,
	long = Cardinalized Probability Hypothesis Density
}
\DeclareAcronym{CBMEMBER}{
	short = CB-MeMBeR,
	long = Cardinality Balanced Multi-target Multi-Bernoulli
}
\DeclareAcronym{TDOA}{
	short = TDOA,
	long = time difference of arrival
}
\DeclareAcronym{FDOA}{
	short = FDOA,
	long = frequency difference of arrival
}
\DeclareAcronym{OSPA}{
	short = OSPA,
	long = Optimal Subpattern Assignment
}
\DeclareAcronym{GCI}{
	short = GCI,
	long = Generalized Covariance Intersection
}
\begin{document}

\ifbool{isgitdraft}{
  \begin{titlepage}
    \centering
    \vspace*{5cm}
    \huge\textit{\textbf{DRAFT}}: Multi-sensor Joint Adaptive Birth Sampler for Labeled Random Finite Set Tracking\\
    \vspace{2\baselineskip}
    \large Compiled on \today\\
    \vspace{2\baselineskip}
    \textbf{Git Hash}: \texttt{\gitHash}\\
    \vspace{2\baselineskip}
    \textbf{Git Branch}: \texttt{\gitBranch}\\
    \vfill
  \end{titlepage}
  \newpage
}

\title{Multi-sensor Joint Adaptive Birth Sampler for Labeled Random Finite Set Tracking}

\author{
    Anthony~Trezza,~\IEEEmembership{Member,~IEEE,}
    Donald~J.~Bucci~Jr.,~\IEEEmembership{Senior Member,~IEEE,}
    Pramod~K.~Varshney,~\IEEEmembership{Life Fellow,~IEEE}
    \thanks{A. Trezza and D.J. Bucci are with Lockheed Martin Advanced Technology Labs, Cherry Hill, NJ 08002 USA (e-mail: anthony.t.trezza@lmco.com; donald.j.bucci.jr@lmco.com).}
    \thanks{P. K. Varshney is with the Department of Electrical Engineering and Computer Science, Syracuse University, Syracuse, NY 13244 USA (e-mail: varshney@syr.edu).}
}

\maketitle
%
\begin{abstract}
This paper provides a scalable, multi-sensor measurement adaptive track initiation technique for labeled random finite set filters.
A naive construction of the multi-sensor measurement adaptive birth set distribution leads to an exponential number of newborn components in the number of sensors.
A truncation criterion is established for a labeled multi-Bernoulli random finite set birth density.
The proposed truncation criterion is shown to have a bounded L1 error in the generalized labeled multi-Bernoulli posterior density.
This criterion is used to construct a Gibbs sampler that produces a truncated measurement-generated labeled multi-Bernoulli birth distribution with quadratic complexity in the number of sensors.
A closed-form solution of the conditional sampling distribution assuming linear Gaussian likelihoods is provided, alongside an approximate solution using Monte Carlo importance sampling.
Multiple simulation results are provided to verify the efficacy of the truncation criterion, as well as the reduction in complexity.
\end{abstract}

\begin{IEEEkeywords}
    Random finite sets,
    Target tracking,
    Gibbs sampling,
    State estimation,
    Measurement adaptive birth
\end{IEEEkeywords}

\IEEEpeerreviewmaketitle
%
\section{Introduction}\label{sec::introduction}
\IEEEPARstart{T}{he} goal of a multi-object tracking algorithm is to estimate the number of objects and their trajectories from measurements observed at one or more sensors.
Many tracking approaches have surfaced including, \ac{GNN} techniques~\cite{Blackman1999},~\ac{JPDA}~\cite{Barshalom2009},~\ac{MHT}~\cite{Blackman2004},~\ac{BP}~\cite{Meyer2018}, and~\ac{RFS}~\cite{Mahler2007, Mahler2014}.
We direct the reader to~\cite{Vo2015} for a detailed survey of the field, recent advances, and example applications.
In this work, we focus on the canonical problem formulation where point objects are observed by multiple sensors at discrete time instants and are incorporated into state estimates via an online filtering recursion.

A key component of multi-object tracking techniques involves constructing newborn object tracks (i.e., \emph{track initialization}).
In~\ac{GNN},~\ac{MHT} and~\ac{JPDA} techniques, newborn objects are constructed from unassociated measurements using application specific procedures~\cite{Kennedy2008, Hu1997}.
In contrast,~\ac{RFS} trackers leverage concepts from~\ac{FISST}~\cite{Vo2013, Mahler2007} to create \emph{multi-object} prior distributions representing newborn objects.
For the~\ac{PHD}~\cite{Vo2006},~\ac{CPHD}~\cite{Vo2007} and~\ac{PMBM}~\cite{Garcia2018} filters, the multi-object prior is a Poisson~\ac{RFS} describing the average number of newborn objects and their joint spatial distribution.
For the~\ac{LMB}~\cite{Reuter2014, Reuter2017} and~\ac{GLMB}~\cite{Vo2014, Vo2017, Vo2019} filters, the multi-object prior is a~\ac{LMB}~\ac{RFS} representing tuples of birth probabilities and spatial distributions for each newborn object.
In a \emph{static birth} strategy, the multi-object prior is fixed for the duration of a filter's runtime and encodes known prior information.
Static birth strategies are typically used when objects enter the surveillance volume in known predictable locations (e.g., air traffic control~\cite{Skolnik2008}).
However, they do not include methods for re-acquiring dropped tracks.
In a \emph{measurement adaptive birth} strategy, the multi-object prior is determined from measurements each time the filtering recursion is called.
This approach is effective in many applications since minimal prior information is known about where and how objects can appear in the surveillance volume.
Care must be taken when designing these strategies to ensure that tracker performance is maintained without adversely affecting computational complexity.

For the remainder of this paper, we will limit our discussion to object birthing strategies for~\ac{RFS} track filtering algorithms.
Measurement adaptive birth strategies for~\ac{RFS} track filtering algorithms have been discussed extensively in \emph{single-sensor} applications.
These strategies were first formalized in the RFS tracking literature for the~\ac{PHD} and~\ac{CPHD} filters in~\cite{Ristic2012}.
The authors proposed an augmented state space model, allowing for a separate specification of the~\ac{PHD} intensity function for newborn objects.
They then focused on efficient particle placement strategies under a sequential Monte Carlo (i.e., particle) realization of the~\ac{PHD} and~\ac{CPHD} filters.
This technique was later generalized for observable and unobservable state space partitions (i.e., partially uniform) under Gaussian Mixture belief states in~\cite{Beard2013}.
Single-sensor adaptive birth techniques for the~\ac{CBMEMBER} filter were proposed in~\cite{Reuter2013, Changshun2018, Hu2019}.

Single-sensor adaptive birth strategies were introduced in~\cite{Reuter2014} for the~\ac{LMB} filter and in~\cite{Lin2016} for the~\ac{GLMB} filter.
These techniques dynamically construct a \ac{LMB} multi-object prior distribution using each measurement.
The birth rate is controlled by subdividing a fixed birth rate proportionally per \ac{LMB} component based on the relative probabilities of associating with any persisting object.
The spatial distributions per newborn target are then constructed using the approach in~\cite{Ristic2012}.
Generalizations of these techniques have recently been proposed in~\cite{Legrand2018, Zhu2021, Hoher2020,Do2020,Yoon2011,Liu2020}.
The authors in~\cite{Legrand2018} note that the birth technique of~\cite{Reuter2014, Lin2016} can lead to multiple targets being born from the same measurement.
To address this, they propose an alternative~\ac{GLMB} filter structure that models the birth distribution using a labeled Poisson~\ac{RFS} and shifts the target birth procedure from the prediction to the update step.
The authors in~\cite{Zhu2021} propose an adaptive birth model for the particle~\ac{GLMB} filter based on interval measurements and relevance likelihood functions.
A detection-driven approach based on Rauch-Tung-Striebel (RTS) smoothing to adaptively refine the birth distribution of the LMB filter is provided in~\cite{Hoher2020}.
Another approach based on running a parallel~\ac{CPHD} filter to bootstrap the~\ac{GLMB} filter at each time step is proposed in~\cite{Do2020}.
Multi-time step initialization techniques for unlabeled and labeled~\ac{RFS} filtering algorithms are provided in~\cite{Yoon2011} and~\cite{Liu2020} respectively.

In contrast to the single-sensor adaptive birth strategies, multi-sensor multi-target filtering studies typically assume a static birth distribution (e.g.,~\cite{Gostar2021, Tian2019, Vo2019, Meyer2018, Frohle2019}).
Multi-sensor multi-object adaptive birth strategies have not yet been investigated systematically in the~\ac{RFS} tracking literature, especially for labeled~\ac{RFS} filters.
The authors in~\cite{Lanterman2008} propose a clustering method for newborn objects for a particle~\ac{PHD} filter via a coarse discretization of bistatic range measurements, projected into the object state space. 
A multi-sensor adaptive birth approach is proposed for the~\ac{PHD} filter in~\cite{Berry2018} based on the iterated-corrector heuristic and applied to a tracking problem using~\ac{TDOA} and~\ac{FDOA} measurements.
A brute force extension of the~\ac{GLMB} filter adaptive birth technique proposed in~\cite{Lin2016} is suggested in~\cite{Liu2017} for non-overlapping fields of view.

Alternatively, the multi-sensor multi-target filtering problem can be formulated using a track-to-track fusion architecture.
Centralized fusion architectures assume that a centralized fusion node has access to every sensor's measurements which are jointly used to refine the global multi-object posterior density.
Track-to-track fusion architectures are structured such that each sensing node runs a local single-sensor multi-object tracking algorithm, using its own measurements to refine a local multi-object posterior.
Then, each sensing node's local multi-object posterior is shared and fused either at a centralized fusion node or using a distributed processing technique to obtain an approximation of the global multi-object posterior.
In these approaches, each sensor's local multi-object tracker is responsible for constructing a newborn multi-object density at each time step.
When fusing labeled~\ac{RFS} multi-object posteriors, the track-to-track fusion procedure becomes more challenging due to the possibility of label inconsistency (i.e., the same object may have been initialized with different labels in each local sensor's filter).
This occurs when each sensor's local multi-object tracker has a different static multi-object birth prior or when they use a single-sensor adaptive birth procedure.
Several centralized and distributed multi-sensor multi-object track-to-track fusion architectures have been proposed for labeled~\ac{RFS} filters to address these challenges using approximations such as label matching \cite{Li2019} or local marginalization to unlabeled filter variants \cite{Li2018}.

The lack of a well-defined, systematic approach to multi-sensor multi-object adaptive birth strategies is problematic for many applications. 
In tracking applications, managing the number of ghost tracks is paramount to maintaining tracker accuracy and minimizing runtime complexity~\cite{Bishop2007}.
An exhaustive evaluation of potential object birth distributions from all sensor measurement tuples is exponential in the number of sensors.
Managing this complexity necessitates elimination of implausible births from sensor tuples that do not geometrically cluster~\cite{Jia2017}, or that have incompatible auxiliary features~\cite{Reed2008, Alexandridis2015, Wang2020}.

In this paper, we derive an efficient multi-sensor multi-object adaptive birth strategy for labeled~\ac{RFS} filters using a Gibbs sampler.
We show that this significantly improves the scalability of constructing the adaptive birth~\ac{LMB} multi-object prior by reducing the complexity from exponential to quadratic in the number of sensors without adversely affecting tracking performance.
As opposed to ad hoc clustering techniques~\cite{Jia2017, Berry2018}, our approach exponentially converges to an established truncation criterion that results in a bounded L1 error in the \ac{GLMB} posterior.
The proposed approach can easily be modified to incorporate information from auxiliary features~\cite{Reed2008, Alexandridis2015, Wang2020}, but it is not dependent upon it.
This results in a more robust solution for applications where feature data is unavailable or unreliable.

The major contributions of this paper are,
\begin{itemize}
\item A formal definition of the multi-sensor adaptive birth \ac{LMB} density in the context of~\cite{Reuter2014, Lin2016}.
\item A truncation criterion for a \ac{LMB} birth density and its bounded L1 error in the posterior $\delta$-\ac{GLMB} density.
\item An approach for generating a truncated multi-sensor adaptive birth \ac{LMB} density via Gibbs sampling that achieves quadratic complexity in the number of sensors.
\item A Monte Carlo approximation of the Gibbs sampling algorithm and construction of the multi-sensor adaptive birth \ac{LMB} density.
\item The derivations for the closed-form solution of the Gibbs sampling algorithm and construction of the multi-sensor adaptive birth \ac{LMB} density assuming linear Gaussian models.
\end{itemize}

The paper is organized as follows.
Section~\ref{sec::background} provides background material on labeled~\ac{RFS} tracking and the multi-sensor $\delta$-\ac{GLMB} filtering recursion.
Section~\ref{sec::truncation} presents a truncation criterion for a \ac{LMB} birth density and its bounded L1 error in the posterior $\delta$-\ac{GLMB} density.
Section~\ref{sec::problem} formalizes the multi-sensor adaptive birth problem and its complexity challenges.
Section~\ref{sec::gibbs} derives the proposed Gibbs sampling truncation technique to generate a birth \ac{LMB} density from multi-sensor measurement tuples.
Section~\ref{sec::mc} and Section~\ref{sec::gm} provide the Monte Carlo approximation and the derivation of the closed-form solution under Gaussian likelihoods respectively.
Section~\ref{sec::sim} provides simulated results of the proposed adaptive birth technique used in the~\ac{LMB} and~$\delta$-\ac{GLMB} filters.
Finally, concluding remarks are provided in Section~\ref{sec::conclusions}.
\section{Background}\label{sec::background}

We adopt the following notation from~\cite{Vo2013, Vo2019}.
Single-object states are represented by lowercase letters, e.g. $x, \textbf{x}$.
Multi-object states are represented as uppercase letters, e.g. $X, \textbf{X}$.
Labeled states and their distributions will be represented by bold letters, e.g. $\textbf{x}, \textbf{X}, \boldsymbol{\pi}$.
Spaces will be represented using blackboard bold letters, e.g. $\mathbb{X}, \mathbb{Z}, \mathbb{L}, \mathbb{R}$, etc.
The sequence of variables $X_i, X_{i+1}, \dots, X_j$ will be abbreviated as $X_{i:j}$.
The standard inner product $\int f(x) g(x) dx$ will be written as $\langle f, g \rangle$.
For a finite set $X$ with arbitrary elements and real-valued function $h$, the product $\prod_{x \in X} h(x)$ will be written in multi-object exponential form, $h^X$, with $h^\emptyset = 1$ by convention.
The generalized Kronecker delta function over arbitrary arguments is defined as,
\begin{equation}
	\delta_Y(X) \triangleq
	\begin{cases}
		1,		& \text{if } X = Y\\
		0,		& \text{otherwise}
	\end{cases}.
\end{equation}
Finally, we denote the set inclusion function as,
\begin{equation}
	1_Y(X) \triangleq
	\begin{cases}
		1,		& \text{if } X \subseteq Y\\
		0,		& \text{otherwise}
	\end{cases}.
\end{equation}
As a shorthand notation, we adopt the notation $1_Y(x)$ in place of $1_Y(\{x\})$ when $X = \{x\}$.

\subsection{Labeled Multi-object State}\label{sec::background:lmos}
Let $x_k \in \mathbb{X}$ be a random state vector and $l_k \in \mathbb{L}_k$ be a unique label at time step $k$.
Let $\textbf{x}_k = (x_k, l_k) \in \mathbb{X} \times \mathbb{L}_k$ be defined as a labeled object state at time step $k$.
The label space for all objects up to time $k$ is the disjoint union $\mathbb{L}_k = \biguplus^k_{t=0} \mathbb{B}_t$, where $\mathbb{B}_t$ denotes the label space for objects born at time step $t$~\cite{Vo2013}.
The collection of object states $\textbf{x}_{k,1}, \dots, \textbf{x}_{k, N_k}$, known as the multi-object state, is modeled as a labeled~\ac{RFS},
\begin{equation}\label{eq::x_rfs}
	\textbf{X}_k = \{\textbf{x}_{k,1}, \dots, \textbf{x}_{k, n_k}\} \in \mathcal{F}(\mathbb{X} \times \mathbb{L}_k).
\end{equation}
\noindent where $\mathcal{F}(\mathbb{X} \times \mathbb{L}_k)$ is the collection of all finite subsets on $\mathbb{X} \times \mathbb{L}_k$.
Define $\mathcal{L}(\textbf{X}) = \{l : (x, l) \in \textbf{X}\}$ as the set of all labels in $\textbf{X}$.
Since all labels must be unique, we have $\delta_{|\textbf{X}|}(|\mathcal{L}(\textbf{X})|) = 1$,
where $| \cdot |$ denotes the set cardinality.
Define the distinct label indicator as~\cite{Vo2013},
\begin{equation}
	\Delta(\textbf{X}) \triangleq \delta_{|\textbf{X}|}(|\mathcal{L}(\textbf{X})|).
\end{equation}
For the remainder of this paper, we will drop the subscript notation for current time step $k$ and use subscript '$+$' to indicate the next time step ($k+1$).

\subsection{Multi-object Dynamic Model}\label{sec::background::mod}
Between each time step, every object $(x, l) \in \textbf{X}$ can survive with probability $p_s(x,l)$ or can die with probability $q_s(x, l) = 1 - p_s(x,l)$.
If it survives, it evolves to the new state $(x_+, l_+)$ according to the Markov transition density $f_+(x_+|x, l)\delta_l[l_+]$.
The set of surviving objects, $\textbf{W}$, is modeled as a~\ac{LMB}~\ac{RFS} with parameter set $\{(p_s(\textbf{x}), f(\cdot|\textbf{x})): \textbf{x} \in \textbf{X}\}$ distributed according to~\cite{Vo2013, Vo2019},
\begin{equation}
	\textbf{f}_{S,+}(\textbf{W}|\textbf{X}) = \Delta(\textbf{W})\Delta(\textbf{X})1_{\mathcal{L}(\textbf{X})}(\mathcal{L}(\textbf{W}))\left[\Phi(\textbf{W}; \cdot)\right]^\textbf{X},
\end{equation}
where
\begin{multline}
	\Phi(\textbf{W}; x, l) = \sum\limits_{(x+,l+)\in \textbf{W}} \delta_l(l_+)p_s(x,l)f(x_+|x,l) +\\
		[1 - 1_{\mathcal{L}(\textbf{W})}(l)]q_s(x,l).
\end{multline}

The set of newborn objects, $\textbf{B}_+$, is modeled as an~\ac{LMB}~\ac{RFS} with density,
\begin{equation}\label{eq::birth_lmb_def}
	\textbf{f}_{B,+}(\textbf{B}_+) = \Delta(\textbf{B}_+) [1_{\mathbb{B}_+} r_{B, +}]^{\mathcal{L}(\textbf{B}_+)}[1 - r_{B, +}]^{\mathbb{B}_+ - \mathcal{L}(\textbf{B}_+)}p^{\textbf{B}_+}_{B,+}
\end{equation}
where $r_{B,+}(l_+)$ is the probability that an object is born with label $l_+$ and $p_{B,+}$ is the spatial distribution of its kinematic state~\cite{Vo2013, Vo2019}.

The predicted multi-object state, $\textbf{X}_+$ is the superposition of surviving and new born objects, $\textbf{X}_+ = \textbf{W} \cup \textbf{B}_+$.
Under the standard assumption that, conditioned on $\textbf{X}$, objects move, appear, and die independently of one another, the expression for the multi-object transition density is modeled as~\cite{Vo2013, Vo2019},
\begin{equation}
	\textbf{f}_+(\textbf{X}_+|\textbf{X}) = \textbf{f}_{S,+}(\textbf{X}_+ \cap (\mathbb{X} \times \mathbb{L}) | \textbf{X})\textbf{f}_{B,+}(\textbf{X}_+ - (\mathbb{X} \times \mathbb{L})).
\end{equation}

\subsection{Multi-object Observation Model}\label{sec::background::moo}

Suppose the multi-object state $\textbf{X}$ is partially observed by $V \geq 1$ sensors, denoted $s \in \{1, \dots, V\}$, in observation space $\mathbb{Z}^{(s)}$.
Object state $\textbf{x} \in \textbf{X}$ is either detected by sensor $s$ with probability $p_D^{(s)}(\textbf{x})$ or does not generate a measurement with probability $1 - p_D^{(s)}(\textbf{x})$.
If an object is detected, it generates a noisy measurement $z^{(s)}$, modeled by the measurement likelihood function $g^{(s)}(z^{(s)}|\textbf{x})$.
The set of detected points at each sensor are modeled as forming a multi-Bernoulli RFS with parameter set $\{(p_D^{(s)}(\textbf{x}), g^{(s)}(z^{(s)}|\textbf{x})): \textbf{x} \in \textbf{X}\}$, assuming conditional independence on $\textbf{X}$ between each Bernoulli RFS.
In addition, sensor $s$ observes a set of clutter-generated measurements modeled as being sampled according to a Poisson distribution with intensity $\kappa^{(s)}$.
The multi-object observation, $Z^{(s)}$, is modeled as the superposition of object detections and clutter-generated measurements.
It follows that the measurement likelihood is the convolution of the detected multi-Bernoulli RFS distribution and clutter-generated Poisson RFS distributions~\cite{Vo2019}.

Let $m^{(s)}$ be the number of measurements in multi-object observation set $Z^{(s)}$.
We define $\mathbb{J}^{(s)} = \{1, \dots, m^{(s)}\} \subset \mathbb{N}$ as an enumeration index space into $Z^{(s)}$ such that elements $j^{(s)} \in \mathbb{J}^{(s)}$ uniquely index each measurement in $Z^{(s)}$.
The standard single-sensor, multi-object observation likelihood function is given by~\cite{Vo2019},
\begin{equation}
	g^{(s)}(Z^{(s)}|\textbf{X}) \propto \sum\limits_{\theta^{(s)} \in \Theta^{(s)}}1_{\Theta^{(s)}(\mathcal{L}(\textbf{X}))}(\theta^{(s)})  [\psi^{s, \theta^{(s)} \circ \mathcal{L}(\cdot)}_{Z^{(s)}}(\cdot)]^\textbf{X},
\end{equation}
where $\theta^{(s)} \circ \mathcal{L}(\textbf{x}) = \theta^{(s)}(\mathcal{L}(\textbf{x}))$,
$\theta^{(s)}$ is a positive 1:1 function from the object label to the measurement index, and $\theta^{(s)} : \mathbb{L} \rightarrow \{0\} \cup \mathbb{J}^{(s)}$ with $0$ denoting an undetected object label by convention.
For brevity, let the space $\mathbb{J}^{(s)}_0 = \{0\} \cup \mathbb{J}^{(s)}$ be defined as the measurement enumeration space augmented with the element  0.
Note that $\theta^{(s)}$ is only injective for $\theta^{(s)}(l) \in \mathbb{J}^{(s)}$ (i.e., $\theta^{(s)}(i) = \theta^{(s)}(i') > 0 \implies i = i'$) since several object labels may map to element $0$.
The collection $\Theta^{(s)}$ is the set of all $\theta^{(s)}$ maps, with $\Theta^{(s)}(I)$ as the subset of $\Theta^{(s)}$ with domain $I$.
The pseudolikelihood function is given by~\cite{Vo2019},
\begin{equation}\label{eq::meas_likelihood}
	\psi^{s, j^{(s)}}_{Z^{(s)}}(\textbf{x}) =
	\begin{cases}
		\frac{p_D^{(s)}(\textbf{x})g^{(s)}(z^{(s)}_{j^{(s)}} | \textbf{x})}{\kappa^{(s)}(z^{(s)}_{j^{(s)}})} 	& j^{(s)} \in \mathbb{J}^{(s)}\\
		1 - p_D^{(s)}(\textbf{x}) 																			& j^{(s)} = 0
	\end{cases}.
\end{equation}

We present an augmented version of the multi-sensor abbreviated notation provided in~\cite{Vo2019},
\begin{align*}
	\mathbb{J}_0 &\triangleq \mathbb{J}_0^{(1)} \times \dots \times \mathbb{J}_0^{(V)},
	&&J \triangleq (j^{(1)}, \dots, j^{(V)}),\\
	\Theta 	&\triangleq \Theta^{(1)} \times \dots \times \Theta^{(V)},\hfill
	&&\theta 	\:\triangleq (\theta^{(1)}, \dots, \theta^{(V)}),\\
	Z 			&\triangleq (Z^{(1)}, \dots, Z^{(V)}),\hfill
	&& Z_J \triangleq (z^{(1)}_{j^{(1)}}, \dots, z^{(V)}_{j^{(V)}}),\\
	1_{\Theta(I)}(\theta) &\triangleq \prod\limits^V_{s=1} 1_{\Theta^{(s)}(I)}(\theta^{(s)}),\hfill
	&&\psi^J_Z(\textbf{x}) \triangleq \prod\limits^V_{s=1} \psi^{s, j^{(s)}}_{Z^{(s)}}(\textbf{x})
\end{align*}
where $\mathbb{J}_0$ is defined as the \textit{augmented multi-sensor, multi-observation index space}, with index $\{0\}$ added as notation to denote missed detections.
Let $J \in \mathbb{J}_0$ be defined as a \textit{multi-sensor index tuple} that contains a $0-augmented$ measurement index for every sensor.
By conditional independence of the sensors, the multi-sensor multi-object likelihood is of the same form as the single-sensor multi-object likelihood~\cite{Vo2019},
\begin{equation}
	\begin{split}
		g(Z|\textbf{X}) &= 	\prod\limits^V_{s = 1} g^{(s)}(Z^{(s)} | \textbf{X})\\
						&\propto \sum\limits_{\theta \in \Theta} \delta_{\Theta(\mathcal{L}(\textbf{X}))}(\theta)[\psi^{\theta \circ \mathcal{L}(\cdot)}_{Z}(\cdot)]^\textbf{X}.
	\end{split}
\end{equation}

\subsection{Multi-sensor $\delta$-GLMB Recursion}\label{sec::background::msglmb}

A $\delta$-\ac{GLMB} density is a labeled multi-object density of the form~\cite{Vo2013},
\begin{equation}
	\boldsymbol{\pi}(\textbf{X}) = \Delta(\textbf{X}) \sum\limits_{I, \xi} w^{(I, \xi)} \delta_I[\mathcal{L}(\textbf{X})][p^{(\xi)}(\cdot)]^\textbf{X},
\end{equation}
where $I \in \mathcal{F}(\mathbb{L})$ is a finite subset of object labels, $\xi \in \Xi$ represents a history of multi-sensor association maps, and $p^{(\xi)}(\cdot, l)$ is a probability density on $\mathbb{X}$.
Each hypothesis weight $w^{(I, \xi)}$ is non-negative and sum to $1$.

Under the multi-object dynamic and measurement model described in Sections~\ref{sec::background::mod} and~\ref{sec::background::moo}, the $\delta$-\ac{GLMB} density is a conjugate prior with itself.
This results in a closed-form solution to the multi-object Bayes filtering recursion.
Since the multi-sensor likelihood has the same form as the single-sensor likelihood function, it follows that the multi-sensor multi-object posterior is also a $\delta$-\ac{GLMB} and is given by~\cite{Vo2013},
\begin{equation}\label{eq::glmb_posterior}
	\begin{split}
	&\boldsymbol{\pi}_+(\textbf{X}) \propto\\
	&\Delta(\textbf{X})\sum\limits_{I, \xi, I_+, \theta_+} w^{(I, \xi)} w_{Z_+}^{(I, \xi, I_+, \theta_+)}\delta_{I_+}[\mathcal{L}(\textbf{X})]\left[p^{(\xi, \theta_+)}_{Z_+}\right]^\textbf{X}.
	\end{split}
\end{equation}
\noindent where
\begin{subequations}
	\begin{align}
		\begin{split}
			&w_{Z_+}^{(I, \xi, I_+, \theta_+)} = 1_{\Theta_+(I_+)}(\theta_+)[1 - \bar{P}^{(\xi)}_s]^{I-I_+}[ \bar{P}^{(\xi)}_s]^{I \cap I_+}\\
			&\qquad\qquad\quad\times\left[1 - r_{B,+}\right]^{\mathbb{B}_+ - I_+} r_{B, +}^{\mathbb{B}_+\cap I_+}[\bar{\psi}^{(\xi, \theta_+)}_{Z_+}]^{I_+}
		\end{split}\label{eq::glmb_update::w}\\
		&\bar{P}^{(\xi)}_s(l) = \langle p^{(\xi)}(\cdot, l), p_s(\cdot, l) \rangle\\
		&\bar{\psi}^{(\xi, \theta_+)}_{Z_+}(l_+) = \langle \bar{p}^{(\xi)}_s(\cdot, l_+), \psi^{(\theta_+(l_+))}_{Z_+}(l_+) \rangle\label{eq::glmb_update::psi_bar}\\
		\begin{split}
			&\bar{p}^{(\xi)}_+(\cdot, l_+) = 1_{\mathbb{B}_+}(l_+)p_{B,+}(x_+, l_+)\\
			&\qquad\qquad + 1_{\mathbb{L}}(l_+) \frac{\langle p_s(\cdot, l_+)f_+(x_+|\cdot, l_+), p^{(\xi)}(\cdot, l_+) \rangle}{\bar{P}^{(\xi)}_s(l_+)}
		\end{split}\label{eq::glmb_update::p+}\\
		&p^{(\xi, \theta_+)}_{Z_+}(x_+, l_+) = \frac{\bar{p}^{(\xi)}_+(x_+, l_+)\psi^{(\theta_+(l_+))}_{Z_+}(x_+, l_+)}{\bar{\psi}^{(\xi, \theta_+)}_{Z_+}(l_+)}\label{eq::glmb_update::pZ+}
	\end{align}\label{eq::glmb_update}
\end{subequations}
Initial implementation details for the $\delta$-\ac{GLMB} filter were introduced in~\cite{Vo2013} for Gaussian mixture and sequential Monte Carlo (i.e., particle) spatial distributions.
For tractability,~\cite{Vo2013} truncates~\ac{GLMB} multi-target exponentials using the ranked assignment and K-shortest path algorithms in the prediction and update steps respectively.
A more efficient implementation of the $\delta$-\ac{GLMB} filter is provided in~\cite{Vo2017} which combines the filtering recursion into a joint predict-update step and presents a stochastic truncation algorithm based on Gibbs sampling.
Multi-sensor implementation details were initially discussed in~\cite{Papi2016} using an iterated-corrector approach, and later formalized in~\cite{Vo2019} for the joint multi-sensor formulation.
%
%
\subsection{$\delta$-GLMB Measurement Association Probabilities}
The probability that a single-sensor's measurement, $z^{(s)}_{j_+^{(s)}, +}$, is associated with existing targets in the $\delta$-\ac{GLMB} posterior density is proportional to the sum of the posterior hypotheses weights where measurement $z^{(s)}_{j_+^{(s)}, +}$ is associated with a target.
More formally,
\begin{equation}
	r_{A,+}(j_+^{(s)}) \propto \sum\limits_{I, \xi, I_+, \theta_+} 1_{\theta_+^{(s)}}(j_+^{(s)}) w^{(I, \xi)} w_{Z_+}^{(I, \xi, I_+, \theta_+)},
\end{equation}
where $1_{\theta_+^{(s)}}(j_+^{(s)})$ is the inclusion function denoting that measurement $z^{(s)}_{j_+^{(s)}, +}$ is associated with a target in sensor $s$'s mapping $\theta_+^{(s)}$ \cite{Lin2016}.
By notation, let $r_{A,+}(0) = 0$ which is intuitive as it suggests that a missed detection did not associate with any tracks in the existing hypotheses.
The probability that the multi-sensor measurement tuple $J_+$ is unassociated with existing targets in the $\delta$-\ac{GLMB} posterior density is approximated by,
\begin{equation}\label{eq::unassoc_prob}
    r_{U,+}(J_+) \propto \left[1 - r_{A,+}\right]^{J_+}.
\end{equation}
%
%
\section{Birth Label Truncation Criterion}\label{sec::truncation}

Although the size of the birth distribution can be large, in most applications many newborn components do not contribute significantly to the posterior $\delta$-\ac{GLMB}.
This concept is formalized in the following theorem.
\begin{theorem}\label{theorem::l1_dist}
    Let $\textbf{f}_{B,+}$ be the birth \ac{LMB} \ac{RFS} density generated from all possible multi-sensor measurement tuples resulting in a birth label space $\mathbb{B}_+$.
    Let $\textbf{f}'_{B,+}$ be a truncated birth \ac{LMB} \ac{RFS} density of $\textbf{f}_{B,+}$ with a birth label space $\mathbb{B}'_+ \subseteq \mathbb{B}_+$ formed such that,
    \begin{equation}\label{eq::theorem_l1}
        \mathbb{B}'_+ = \{ l_+ \in \mathbb{B}_+ : r_{B, +}(l_+) \geq \epsilon\},
    \end{equation}
    for truncation threshold $\epsilon \geq 0$.
    Let $\mathbb{H} \in \mathcal{F}(\mathbb{L}) \times \mathcal{F}(\mathbb{L} \cup \mathbb{B}_+) \times (\Xi \times \Theta_+)$ and $\mathbb{H}' \in \mathcal{F}(\mathbb{L}) \times \mathcal{F}(\mathbb{L} \cup \mathbb{B}'_+) \times (\Xi \times \Theta_+)$ be the hypotheses in the posterior $\delta$-\ac{GLMB} using the original and truncated birth sets respectively.
    The L1-truncation error between $\mathbb{H}$ and $\mathbb{H}'$ is upper bounded by,
    \begin{equation}
        |\pi_{\mathbb{H}} - \pi_{\mathbb{H}'}| \leq \sum\limits_{(I, \xi, I_+, \theta_+)\in \mathbb{H} - \mathbb{H}'} 1_{\Theta_+(I_+)}(\theta_+) K^{|I_+|} \epsilon^{N_{\mathbb{T}_+}(I_+)},
    \end{equation}
    where $N_{\mathbb{T}_+}(I_+) = |I_+ \cap (\mathbb{B}_+ \setminus \mathbb{B}'_+)|$ is the number of truncated newborn labels in $I_+$, and
	$K$ is a positive upper bound, $0 \leq \bar{\psi}^{(\xi, \theta_+)}_{Z_+}(l_+) \leq K, \; \forall I_+ \in \mathcal{F}(\mathbb{L} \cup (\mathbb{B}_+ \setminus \mathbb{B}'_+))$ with $l_+ \in I_+$ and $\theta_+ \in \Theta_+(I_+)$.
\end{theorem}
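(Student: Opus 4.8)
\emph{Proof proposal.}
The plan is to reduce the L1 truncation error to a sum over the dropped posterior hypotheses and then bound each dropped hypothesis weight by isolating the contribution of the truncated birth labels. First I would observe that both the full posterior and its truncated counterpart are $\delta$-\ac{GLMB} densities of the form in~\eqref{eq::glmb_posterior}, sharing identical spatial densities $p^{(\xi, \theta_+)}_{Z_+}$ and association maps; the only difference is that the truncated density omits every hypothesis $(I, \xi, I_+, \theta_+)$ whose label set $I_+$ is not contained in $\mathbb{L} \cup \mathbb{B}'_+$, i.e.\ exactly the hypotheses in $\mathbb{H} - \mathbb{H}'$, which are those with $N_{\mathbb{T}_+}(I_+) = |I_+ \cap (\mathbb{B}_+ \setminus \mathbb{B}'_+)| \geq 1$. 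Since each $p^{(\xi, \theta_+)}_{Z_+}(\cdot, l_+)$ is a normalized probability density, the set integral of every multi-object exponential term evaluates to one, so taking the set integral of the absolute difference and applying the triangle inequality collapses the L1 error to $|\pi_{\mathbb{H}} - \pi_{\mathbb{H}'}| \leq \sum_{(I, \xi, I_+, \theta_+) \in \mathbb{H} - \mathbb{H}'} w^{(I, \xi)} w_{Z_+}^{(I, \xi, I_+, \theta_+)}$.

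The core of the argument is then a factor-by-factor bound on the weight $w^{(I, \xi)} w_{Z_+}^{(I, \xi, I_+, \theta_+)}$ using~\eqref{eq::glmb_update::w}. The prior weights satisfy $w^{(I, \xi)} \leq 1$, and the survival and non-birth factors $[1 - \bar{P}^{(\xi)}_s]^{I - I_+}$, $[\bar{P}^{(\xi)}_s]^{I \cap I_+}$, and $[1 - r_{B,+}]^{\mathbb{B}_+ - I_+}$ are products of quantities in $[0,1]$ and hence bounded by one. The decisive factor is the birth term $r_{B,+}^{\mathbb{B}_+ \cap I_+}$: I would split the product index $\mathbb{B}_+ \cap I_+$ into the truncated labels $(\mathbb{B}_+ \setminus \mathbb{B}'_+) \cap I_+$ and the retained birth labels $\mathbb{B}'_+ \cap I_+$. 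By the construction of $\mathbb{B}'_+$ in~\eqref{eq::theorem_l1}, every truncated label obeys $r_{B,+}(l_+) < \epsilon$, so the product over the $N_{\mathbb{T}_+}(I_+)$ truncated labels is at most $\epsilon^{N_{\mathbb{T}_+}(I_+)}$, while the retained-label factors are bounded by one. Finally, the pseudolikelihood factor satisfies $[\bar{\psi}^{(\xi, \theta_+)}_{Z_+}]^{I_+} \leq K^{|I_+|}$ by the definition of $K$. Multiplying these bounds and retaining the indicator $1_{\Theta_+(I_+)}(\theta_+)$ yields $w^{(I, \xi)} w_{Z_+}^{(I, \xi, I_+, \theta_+)} \leq 1_{\Theta_+(I_+)}(\theta_+) K^{|I_+|} \epsilon^{N_{\mathbb{T}_+}(I_+)}$, and summing over $\mathbb{H} - \mathbb{H}'$ gives the stated bound.

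I expect the main obstacle to be the reduction step rather than the weight bound. The subtlety is that the retained hypotheses do not carry literally identical weights under the full and truncated birth sets, because the factor $[1 - r_{B,+}]^{\mathbb{B}_+ - I_+}$ ranges over $\mathbb{B}_+$ in one case and $\mathbb{B}'_+$ in the other; these differ by the common factor $\prod_{l \in \mathbb{B}_+ \setminus \mathbb{B}'_+}(1 - r_{B,+}(l))$. I would need to argue that the comparison is arranged so that the retained hypotheses cancel exactly in $\pi_{\mathbb{H}} - \pi_{\mathbb{H}'}$ (e.g.\ by defining the truncated density to retain the full-birth normalization on the kept hypotheses), so that only the dropped hypotheses survive the difference. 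Care is also needed to confirm that truncating the birth \ac{LMB} label space removes precisely the hypotheses in $\mathbb{H} - \mathbb{H}'$ and introduces no new label configurations, which follows from the \ac{LMB} birth structure in~\eqref{eq::birth_lmb_def}, since any newborn label appearing in a posterior hypothesis must originate from the birth label space.
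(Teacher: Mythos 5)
Your proposal is correct and follows essentially the same route as the paper: reduce the L1 error to the sum of the dropped hypothesis weights, then bound every factor of $w^{(I,\xi)} w_{Z_+}^{(I,\xi,I_+,\theta_+)}$ by one except the pseudolikelihood term (bounded by $K^{|I_+|}$) and the product of birth probabilities over truncated labels (bounded by $\epsilon^{N_{\mathbb{T}_+}(I_+)}$). The paper obtains the reduction step by citing the standard $\delta$-GLMB truncation result rather than re-deriving it, and resolves the subtlety you flag in the way you suspect: $\pi_{\mathbb{H}'}$ is treated as the sum over $\mathbb{H}'$ with the same weight expressions, so the retained hypotheses cancel identically and only the dropped ones --- characterized via Lemma~\ref{lemma::I_in_truncation} as exactly those $I_+$ containing at least one truncated label --- contribute to the difference.
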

The proof of Theorem~\ref{theorem::l1_dist} is provided in Appendix~\ref{sec::appendix::proof_l1_dist}.
Intuitively, the truncation criterion is interpreted as pruning components that are expected to have a low existence probability in the $\delta$-\ac{GLMB} posterior density.
The weight of any hypothesis containing a newborn birth label $l_+$ will be multiplicatively proportional to $r_{B,+}(l_+)$ (by inspection and simplification of Equation~(\ref{eq::glmb_update::w})).
If this value is low, then the weight of every hypothesis containing $l_+$ will also be low, resulting in a low existence probability for label $l_+$ \cite{Vo2014}.
The significance of Theorem~\ref{theorem::l1_dist} is that the L1-distance between the truncated and untruncated $\delta$-\ac{GLMB} posterior distributions is upper bounded by a positive polynomial in $\epsilon$.
As $\epsilon \rightarrow 0$, the L1-distance $|\pi_{\mathbb{H}} - \pi_{\mathbb{H}'}| \rightarrow 0$.

\section{Multi-sensor Adaptive Birth Density}\label{sec::problem}

In many practical applications, the birth prior, $p_{B,+}$, is uninformative in one or more states of the state space.
If the birth prior is informative for an application, then a static birth technique could be used to construct a birth~\ac{LMB}~\ac{RFS} density according to Equation~(\ref{eq::birth_lmb_def}).
Instead in an adaptive birth procedure, we aim to construct a birth~\ac{LMB}~\ac{RFS} density for the next time step using multi-sensor measurements from the current time step.
Let each unique birth label be defined as $l_+ = (k+1, J)$, resulting in a birth label space $\mathbb{B}_+ = \{ (k+1, J) : \forall J \in \mathbb{J}_0 \}$.
Since each label is generated using a unique measurement tuple $J$, by construction there is a known bijective mapping between birth labels and measurement tuples, $\theta': \mathbb{B}_+ \rightarrow \mathbb{J}_0$.
Because of this, we will interchange $\theta'(l_+)$ and $J$ as necessary for clarity or brevity.

Extending Reuter, et al.'s suggestion in~\cite{Reuter2014}, let the birth~\ac{LMB}~\ac{RFS} density be constructed with parameter set,
\begin{equation}
	\textbf{f}_{B,+} = \left\{ \left(r_{B,+}(l_+), p_{B,+}(\cdot, l_+ | Z_J) \right) \right\}_{l_+ \in \mathbb{B}_+},
\end{equation}
where $r_{B,+}(l_+)$ and $p_{B,+}(\cdot, l_+ | Z_J)$ are the birth probability and spatial distribution of label $l_+$ respectively.
Since the measurements $Z_J$ were observed at time step $k$ and the birth~\ac{LMB} density is incorporated at time step $k+1$, the spatial distribution of each Bernoulli component is modeled using the posterior density of the birth prior using measurement $Z_J$, then predicted to time step $k+1$.
This procedure reduces the risk of divergence between measurements at the next time step and the components in the birth~\ac{LMB}.
For example, if a set of sensors observe a highly maneuverable aircraft at time step $k$, and the discrete sampling interval between time steps $k$ and $k+1$ is large, then the target may have moved substantially by the time it is next observed.
If this dynamic uncertainty is not accounted for in the birth component's spatial distribution, then the measurements at the next time step may not associate with the birth component and could lead to track switching or track divergence.
The dynamic model uncertainty is incorporated into the spatial distribution using the Chapman-Kolmogorov equation \cite{Mahler2007},
\begin{equation}\label{eq::pred_birth_post}
    p_{B,+}(x_+, l_+ | Z_J) = \int f_+(x_+|x, l) p_B(x, l_+ | Z_J) dx,
\end{equation}
where, by Bayes rule,
\begin{align}
    p_B(x, l_+ | Z_J) &= \frac{p_B(x, l_+) \psi^{J}_Z(x, l_+)}{\bar{\psi}^{J}_{Z}(l_+)} \label{eq::spatial_distr}\\ 
    \bar{\psi}^{J}_{Z}(l_+) &= \langle p_B(\cdot, l_+), \psi^{J}_{Z}(\cdot, l_+)\rangle. \label{eq::psi_bar}
\end{align}

There are two desirable properties of a newborn label $l_+$ generated from a multi-sensor adaptive birth procedure;
(1) measurements used to generate the newborn component should be minimally associated with existing tracks in the current $\delta$-\ac{GLMB} posterior density (i.e., high $r_U(J)$) and,
(2) multi-sensor measurements used to generate newborn components should be consistent across sensors for all $x$ (i.e., high $\bar{\psi}^{J}_{Z}(l_+)$).
Using this insight, we model the birth probability for each Bernoulli component as,
\begin{equation}\label{eq::birth_prob}
	r_{B,+}(l_+) = \min\left(r_{B, \text{max}}, \hat{r}_{B,+}(l_+)\lambda_{B,+} \right),
\end{equation}
where $r_{B, \text{max}} \in [0, 1]$ is the maximum existence probability of a newborn target,
$\lambda_{B,+}$ is the expected number of target births at time step $k+1$,
and the effective birth probability is given as,
\begin{equation}\label{eq::birth_prob_hat}
    \hat{r}_{B,+}(l_+) = \frac{r_U(J) \bar{\psi}^{J}_{Z}(l_+)}{\sum\limits_{J' \in \mathbb{J}_0} r_U(J')\bar{\psi}^{J'}_{Z}(l_+)}.
\end{equation}

\subsection{Multi-sensor Birth Set Size Complexity Analysis}\label{sec::pf::complexity}
The labels in $\mathbb{B}_+$ consist of \textit{all combinations} of multi-sensor measurements.
For $m^{(s)}$ measurements at each sensor, the number of birth labels in $\mathbb{B}_+$ is equal to $\prod^V_{s=1} (m^{(s)} + 1)$.
Assuming the maximum number of measurements from all of the sensors is $m$, the worst-case complexity is $O(m^V)$.
This quickly becomes intractable as the number of sensors or the number of measurements per observation set increases.
For example, in a scenario with 5 sensors and 15 measurements each, there would be $16^{5} \simeq$ 1 million newborn labels.

\section{Gibbs Sampling Truncation of the Multi-sensor Adaptive Birth Density}\label{sec::gibbs}

The truncation criterion described in Theorem~\ref{theorem::l1_dist} requires evaluation of Equation~(\ref{eq::theorem_l1}) over an exponentially large number of newborn components (see Section~\ref{sec::pf::complexity}).
Instead, we aim to derive a technique to efficiently sample labels according to a categorical distribution proportional to,
\begin{equation}\label{eq::orig_sampling_distr}
	p(l_+) \propto r_{B,+}(l_+).
\end{equation}
This is intuitive as it states that newborn labels should be sampled proportionally to their birth probability.
From Equation~(\ref{eq::birth_prob}), $r_{B,+}(l_+)$ is given as $\hat{r}_{B,+}(l_+)\lambda_{B,+}$ upper bounded by $r_{B,max}$.
Assuming that the value of $r_{B, max}$ is chosen large enough that it does not inhibit the differentiation between labels with a large and small value of $\hat{r}_{B,+}(l_+)\lambda_{B,+}$, the sampling distribution can be simplified to,
\begin{equation}\label{eq::sampling_distr}
	p(l_+) \propto \hat{r}_{B,+}(l_+) \propto r_U(J)\bar{\psi}^{J}_{Z}(l_+).
\end{equation}
Directly sampling from the categorical distribution $p(l_+)$, still requires evaluation over an exponential number of possible birth labels.
By construction of the birth set, there exists a 1:1 mapping between $l_+$ and $J$.
This means that sampling from $p(l_+)$ is equivalent to sampling multi-sensor measurement tuples from the joint distribution, $p(J) = p(j^{(1)}, \dots, j^{(V)})$.
Intuitively, this can be seen as sampling multi-sensor measurement tuples that will be used to generate labels in the adaptive birth \ac{LMB}~\ac{RFS} density.
From this insight, we construct a Gibbs sampler to efficiently generate elements from the multi-sensor measurement adaptive birth set according to the truncation criterion established in Theorem~\ref{theorem::l1_dist}.
\begin{theorem}\label{prop::cnd_likelihood}
	The joint likelihood $p(l_+)$ is proportional to the conditional likelihood,
	\begin{equation}
		p(l_+) \propto p(j^{(s)} | J^{-s}), \forall s = 1, \dots, V
	\end{equation}
	where $J^{-s} = (j^{(1)}, \dots, j^{(s-1)}, j^{(s+1)}, \dots, j^{(V)})$.
	For a given sensor $s$ the conditional likelihood is proportional to,
	\begin{equation}\label{eq::cdn_likelihood}
		p(j^{(s)} | J^{-s}) \propto \left(1 - r_A(j^{(s)})\right) \bar{\psi}^{J}_{Z}(l_+),
	\end{equation}
\end{theorem}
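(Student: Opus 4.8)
The plan is to derive the conditional directly from the joint sampling distribution in Equation~(\ref{eq::sampling_distr}), using the elementary Gibbs-sampler identity that the full conditional of one coordinate is proportional to the joint when the remaining coordinates are held fixed. Since the bijection $l_+ \leftrightarrow J$ lets us treat $p(l_+)$ as the joint law $p(j^{(1)}, \dots, j^{(V)})$, the definition of conditional probability gives $p(j^{(s)} \mid J^{-s}) = p(J)/p(J^{-s})$; because $J^{-s}$ is fixed, $p(J^{-s})$ is constant in $j^{(s)}$, which immediately establishes the first claimed proportionality $p(l_+) \propto p(j^{(s)} \mid J^{-s})$ and justifies the per-sensor Gibbs sweep.

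The substantive step is to determine which factors of the joint survive as functions of $j^{(s)}$. First I would substitute the unassociation probability from Equation~(\ref{eq::unassoc_prob}) into Equation~(\ref{eq::sampling_distr}), expanding the multi-object exponential as $[1 - r_A]^J = \prod_{t=1}^V \big(1 - r_A(j^{(t)})\big)$ so that
\[
p(l_+) \propto \left(\prod_{t=1}^V \big(1 - r_A(j^{(t)})\big)\right)\bar{\psi}^J_Z(l_+).
\]
Next I would split the product into the $t = s$ factor and the remaining $t \neq s$ factors, observing that $\prod_{t \neq s}\big(1 - r_A(j^{(t)})\big)$ depends only on the fixed coordinates $J^{-s}$ and can therefore be absorbed into the proportionality constant. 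What remains is $\big(1 - r_A(j^{(s)})\big)\,\bar{\psi}^J_Z(l_+)$, which is exactly Equation~(\ref{eq::cdn_likelihood}).

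The key point to get right --- and the one place the argument could go wrong --- is the treatment of $\bar{\psi}^J_Z(l_+)$. Unlike $r_U(J)$, this term does \emph{not} factor across sensors: from Equation~(\ref{eq::psi_bar}) together with the abbreviated notation $\psi^J_Z(\textbf{x}) = \prod_{s=1}^V \psi^{s, j^{(s)}}_{Z^{(s)}}(\textbf{x})$, we have $\bar{\psi}^J_Z(l_+) = \langle p_B(\cdot, l_+),\, \prod_{s=1}^V \psi^{s, j^{(s)}}_{Z^{(s)}}(\cdot, l_+)\rangle$, an inner product of a product rather than a product of inner products. Consequently $\bar{\psi}^J_Z(l_+)$ depends on the full tuple $J$, including $j^{(s)}$, and cannot be pulled out of the conditional. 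I would therefore keep this term intact and discard only the genuinely $j^{(s)}$-independent factors; pinning down precisely this dependency structure is the crux of the result, since the non-factorizing $\bar{\psi}^J_Z(l_+)$ is exactly what couples the sensors through the shared birth prior $p_B$ and makes the Gibbs step nontrivial.
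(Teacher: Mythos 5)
Your proposal is correct and follows essentially the same route as the paper: the paper's proof consists precisely of absorbing the factors of $\left[1 - r_A\right]^{J}$ that do not depend on $j^{(s)}$ into the normalizing constant, which is your central step. Your added remarks---that the first proportionality follows from $p(j^{(s)} \mid J^{-s}) = p(J)/p(J^{-s})$ with $p(J^{-s})$ constant in $j^{(s)}$, and that $\bar{\psi}^J_Z(l_+)$ cannot be factored across sensors and must be retained in full---are correct and make explicit what the paper leaves implicit.
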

The proof of Theorem~\ref{prop::cnd_likelihood} is found by combining all of the terms in the product $\left[1 - r_A\right]^{J}$ that are not a function $j^{(s)}$ into the normalizing constant.

\begin{algorithm}[t!]
\caption{Multi-sensor Adaptive Birth Gibbs Sampler}\label{alg::gibbs}
	\begin{algorithmic}[1]
		\renewcommand{\algorithmicrequire}{\textbf{Input:}}
		\renewcommand{\algorithmicensure}{\textbf{Output:}}
		\Require
			\Statex $Z$, $r_U$, $p_{B,+}$, $r_{B,\max}$, $\lambda_{B,+}$, $T$
		\Ensure
			\Statex $\textbf{f}'_{B,+}$
		\LeftComment Gibbs Sampling Truncation
		\State $\mathbb{B}'_+ = \emptyset$
		\State $J = \left(0, \dots, 0\right)$
		\For{$t = 1, \dots, T$}
			\For {$s \in \text{shuffle}(\{1, \dots, V\})$}
				\For{$j^{(s)} \in \mathbb{J}^{(s)}_0$}
					\State evaluate $p(j^{(s)} | J^{-s})$ using Theorem~(\ref{prop::cnd_likelihood})
				\EndFor
				\State $j^{(s')} \sim p(j^{(s')} | J^{-s'})$ 
				\State $J = (j^{(1)}, \dots, j^{(s-1)}, j^{(s')}, j^{(s+1)}, \dots, j^{(V)})$
		\EndFor
			\State $\mathbb{B}'_+ = \mathbb{B}'_+ \cup \theta'^{-1}(J)$
		\EndFor
		\LeftComment Constructing the Birth LMB
		\State $\textbf{f}'_{B,+} = \emptyset$
		\For {$l_+ \in \mathbb{B}'_+$}
			\State $J = \theta'(l_+)$
			\State construct $p_{B,+}(\cdot, l_+ | Z_J)$ according to Equation~(\ref{eq::pred_birth_post})
			\State evaluate $r_{B,+}(l_+)$ according to Equation~(\ref{eq::birth_prob})
			\State $\textbf{f}'_{B,+} = \textbf{f}'_{B,+} \cup \{ ( r_{B,+}(l_+), p_{B,+}(\cdot, l_+ | Z_J) ) \}$
		\EndFor
	\end{algorithmic}
\end{algorithm}

Implementation of the Gibbs sampling procedure is provided in Algorithm~\ref{alg::gibbs} where $T$ is the number of Gibbs iterations.
Starting from any initial state $J$, the proposed Gibbs sampler, defined by the conditional likelihood in Theorem~\ref{prop::cnd_likelihood}, converges to the target distribution (Equation~(\ref{eq::orig_sampling_distr})).
For a finite state discrete Gibbs sampler, irreducibility with respect to the target distribution is a sufficient condition for convergence \cite{Roberts1994}.
On the $t$'th Gibbs iteration, the transition probability from any state $J^{t-1}$ to state $J^{t}$ is the product of the conditional likelihoods.
If the detection probability of every sensor $s$ is $p_D^{(s)}(\textbf{x}) < 1$, then the transition probability from any state to the all-missed detection state is strictly positive.
Similarly for any state $J$, if $r_A(j^{(s)}) < 1$ for all $z^{(s)}_{j^{(s)}}$ and $\bar{\psi}^J_{Z}(l_+)$ is continuous, then the transition probability from the all-missed detection state to any state is strictly positive.
Using a similar approach as the proof of \cite[Proposition 4]{Vo2014}, the two-step transition probability from any state to any other state is strictly positive since it can always transition to the all-missed detection state first.

The Gibbs sampler in Algorithm~\ref{alg::gibbs} does not require a burn-in period.
Every unique solution can be directly used since we do not need to discard samples until we reach a stationary distribution.
It is important to ensure that the Gibbs sampler is encouraged to seek diverse solutions and prevent it from stalling at local maxima, sometimes referred to as high probability islands.
In Algorithm~\ref{alg::gibbs}, we randomize the order of the sensor indexes on each Gibbs iteration to encourage exploration.
Other approaches include annealing or tempering techniques that modify the stationary distribution~\cite{Geyer1995, Neal2001} or through restarting logic if a unique solution has not been found after a certain number of samples.

In this paper, we initialize $J$ using the all-missed detection tuple, $J=(0, \dots, 0)$.
When no prior information is available, starting with the all-missed detection tuple allows for more diverse exploration of unique solutions and is less likely to initialize to a high probability island.

\begin{remark}
	In Theorem~\ref{prop::cnd_likelihood}, if the association probability is high for any measurement index, $r_A(j^{(s)}) \approx 1$, then the sampling probability will be close to zero, $p(j^{(s)} | J^{-s}) \approx 0$ regardless of the value of $\bar{\psi}^J_{Z}(l_+)$.
		Since in practice $r_A$ needs to be calculated at the current time step for all $j^{(s)}$ before the adaptive birth procedure, a pre-pruning technique can be employed before Algorithm~\ref{alg::gibbs} by sampling over the subset of measurement indexes $\{ j^{(s)} \in \mathbb{J}^{(s)}_0 : r_A(j^{(s)}) > \tau \}$ for all $s \in V$, where $\tau \in [0, 1]$ is a user-specified maximum association probability threshold.
		This removes measurement indexes from being considered in the sampler that are unlikely since $p(j^{(s)} | J^{-s}) \approx 0$.
\end{remark}

The implementation of Algorithm~\ref{alg::gibbs} consists of computing or constructing three probability distributions which, in practice, can be challenging to evaluate; (1) the sampling distribution $p(j^{(s)} | J^{-s})$, (2) the newborn component's spatial distribution $p_{B,+}(\cdot, l_+ | Z_J)$ and (3) the newborn component's existence probability $r_{B,+}(l_+)$.
One of the primary challenges associated with evaluating these distributions is computing the inner product $\bar{\psi}^J_Z$.
Section~\ref{sec::mc} addresses these challenges using a Monte Carlo approximation and Section~\ref{sec::gm} addresses these challenges under a linear Gaussian model assumption.

\section{Monte Carlo Approximation}\label{sec::mc}
\subsection{Monte Carlo Approximation of $\bar{\psi}^J_Z(l_+)$}\label{sec::mc::psi_bar}
By Equation~(\ref{eq::psi_bar}), since the birth prior density $p_B$ is a valid probability density on $x$, then the inner product is the expected value of $\psi^J_Z(\cdot, l_+)$ such that, $\bar{\psi}^J_Z(l_+) = \mathbb{E}_{p_B}[\psi^J_Z(\cdot, l_+)]$.
Using Monte Carlo integration, the expected value of any arbitrary function can be approximated using a set of independently and identically distributed (i.i.d.) samples $\{x_n\}^{N_p}_{n=1}$ from its associated probability density.
That is \cite[Chapter 11]{Bishop2006},
\begin{equation}\label{eq::orig_mc_exp}
	\bar{\psi}^J_Z(l_+) \approx \frac{1}{N_p} \sum\limits_{n=1}^{N_p} w_n(x_n, l_+)
\end{equation}
where $x_n \sim p_B$ and $w_n = \psi^J_Z(x_n, l_+)$.
If an application has an informative prior density, $p_B$, then the expected value can simply be approximated by sampling from $x_n \sim p_B$ and evaluating each sample under the pseudolikelihood.
However as discussed in Section~\ref{sec::problem}, the birth prior distribution $p_B$ is often uninformative.
Sampling $x_n$ from an uninformative distribution would result in a large estimator variance for a fixed number of samples.

Instead, we can use an importance sampling procedure to sample from a more informative proposal distribution, $x_n \sim q$ \cite[Chapter 11]{Bishop2006}.
This leads to the updated importance weights being proportional to the likelihood ratio $p_B/q$,
\begin{equation}\label{eq::wn_orig}
	w_n = \psi^J_Z(x_n, l_+)\frac{p_B(x_n, l_+)}{q(x_n, l_+)}.
\end{equation}
One choice of a proposal distribution is similar to the procedure provided in \cite{Ristic2012}.
First assume independence between the observable states ($x_o$) and unobservable states ($x_u$) such that the birth prior decomposes to $p_B(x, l_+) = p^o_B(x_o, l_+)p^u_B(x_u, l_+)$, where $p^o_B$ and $p^u_B$ are prior densities on the observable and unobservable state spaces respectively.
Our objective is to construct a more informative proposal distribution for the observable ($q^o$) and unobservable ($q^u$) state spaces such that, $q(x, l_+) = q^o(x_o, l_+)q^u(x_u, l_+)$.
Since $x_u$ is unobserved, $q^u$ is modeled as $q^u(x_u, l_+) = p^u_B(x_u, l_+)$.
Since $x_o$ is observed, $q^o$ can be modeled such that a measurement $z^{(s)}_{j^{(s)}}$ can be considered a random sample from the density $g^{(s)}(\cdot | x_o)$.
We choose $j^{(s')}$ to be any non-missed detection measurement in the measurement tuple $J$ (i.e., $j^{(s')} > 0$ for $j^{(s')} \in J$).
Let the observation function for sensor $s'$ be given by $z^{(s')}_{j^{(s')}} = h^{(s')}(x_o) + \eta$,
where $h^{(s')}$ is an invertible function with a differentiable inverse, $h^{(s'),-1}$, and $\eta$ is additive, zero-mean white Gaussian noise such that $\eta \sim \mathcal{N}(0, R^{(s')})$.
Under this model the proposal distribution for observable states is,
\begin{equation}
	q^o(x_o, l_+) = \mathcal{N}(x_o | h^{(s'),-1}(z^{(s')}_{j^{(s')}}), \tilde{H}^{(s')}R^{(s')}\tilde{H}^{(s'),T}),
\end{equation}
where $\tilde{H}^{(s')}$ is the Jacobian of $h^{(s'),-1}$.
This results in importance weights,
\begin{equation}\label{eq::imp_weights}
	w_n = \frac{\psi^{J}_Z(x_n, l_+)p^o_B(x_{n,o}, l_+)}{\mathcal{N}(x_{n,o} | h^{(s'), -1}(z^{(s')}_{j^{(s')}}), \tilde{H}^{(s')}R^{(s')}\tilde{H}^{(s'),T})}.
\end{equation}
Substituting the importance weights from Equation~(\ref{eq::imp_weights}) into Equation~(\ref{eq::orig_mc_exp}) results in an approximation of $\bar{\psi}^J_Z(l_+)$ under this choice of proposal distribution.
%
\subsection{Monte Carlo Approximation of $p(j^{(s)} | J^{-s})$}\label{sec::mc::sample}
\begin{algorithm}[t]
	\caption{Monte Carlo Approximation of $p(j^{(s)}|J^{-s})$}\label{alg::monte_carlo}
	\begin{algorithmic}[1]
	\renewcommand{\algorithmicrequire}{\textbf{Input:}}
	\renewcommand{\algorithmicensure}{\textbf{Output:}}
	\Require
		\Statex $Z$, $J$, $r_A$, $s$, $N_p$
	\Ensure
		\Statex $p(j^{(s)}|J^{-s})$
	\State  $j^{(s')} \sim Cat\left(\left\{j^{(s')} \in J : j^{(s')} > 0\right\}\right)$
	\For{$n = 1, \dots, N_p$}
		\State $x_{n} \sim q(x, l_+)$
		\State compute $w_n$ using Equation~(\ref{eq::imp_weights})
	\EndFor
	\State evaluate $p(j^{(s)}|J^{-s})$ using Equations~(\ref{eq::orig_mc_exp}) and (\ref{eq::cdn_likelihood}) 
	\end{algorithmic}
\end{algorithm}
Substituting the proposed approximation of $\bar{\psi}^J_Z(l_+)$ presented in Section~\ref{sec::mc::psi_bar} into the sampling distribution in Equation~(\ref{eq::cdn_likelihood}) directly results in the Monte Carlo approximation of $p(j^{(s)} | J^{-s})$.
The implementation of the Monte Carlo approximation of $p(j^{(s)}|J^{-s})$ is given in Algorithm~\ref{alg::monte_carlo} where $j^{(s')} \sim Cat\left(\left\{j^{(s')} \in J : j^{(s')} > 0\right\}\right)$ denotes sampling a non-missed detection index from the tuple $J$ with equal probability.
%
\subsection{Monte Carlo Approximation of $p_{B,+}(\cdot, l_+ | Z_J)$}\label{sec::mc::spatial}
\begin{algorithm}[t]
	\caption{Monte Carlo Approximation of $p_{B,+}(\cdot, l_+ | Z_J)$}\label{alg::mc_spatial}
	\begin{algorithmic}[1]
	\renewcommand{\algorithmicrequire}{\textbf{Input:}}
	\renewcommand{\algorithmicensure}{\textbf{Output:}}	
	\Require
		\Statex $Z$, $J$, $N_p$
	\Ensure
		\Statex $\{w_{n,+}, x_{n,+}\}^{N_p}_{n=1}$
	\State  $j^{(s')} \sim Cat\left(\left\{j^{(s')} \in J : j^{(s')} > 0\right\}\right)$
	\For{$n = 1, \dots, N_p$}
		\State $x_{n} \sim q(x, l_+)$
		\State compute $w_n$ using Equation~(\ref{eq::imp_weights})
	\EndFor
	\State normalize and resample $\{w_n, x_n\}^{N_p}_{n=1}$
	\For{$n = 1, \dots, N_p$}
		\State $x_{n,+} \sim f_+(x_{n,+} | x_n, l_+)$
		\State $w_{n,+} = 1/N_p$
	\EndFor
\end{algorithmic}
\end{algorithm}
Our objective is to generate a set of samples, $\{w_n, x_n\}^{N_p}_{n=1}$, that are distributed according to Equation~(\ref{eq::spatial_distr}) such that,
\begin{equation}\label{eq::mc_prior_approx}
	p_B \left( x, l_+ | Z_J\right) \propto p_B(x, l_+) \psi^J_Z(x, l_+) \approx \sum^{N_p}_{n = 1} w_n \delta_{x}(x_n).
\end{equation}
This can be accomplished by sampling $x_n \sim p_B$ with weights $w_n = \psi^J_Z(x_n, l_+)$.
If $p_B(x, l_+)$ is informative and easy to sample from, then this technique can be easily implemented.
If it is not easy to sample from, an importance sampling technique similar to that proposed in Section~\ref{sec::mc::psi_bar} can be used.
To help avoid degeneracy, a resampling procedure such as multinomial, stratified, or systematic resampling can be employed \cite{Li2015}.

Given the Monte Carlo approximation of $p_B \left( x, l_+ | Z_J\right)$ in Equation~(\ref{eq::mc_prior_approx}), the posterior density predicted to the birth time from Equation~(\ref{eq::pred_birth_post}) is given by,
\begin{equation}
	p_{B,+} \left(x_+, l_+ | Z_J\right) \approx \sum^{N_p}_{n=1} w_n f_+(x_+ | x_n)\delta_{x}(x_n).
\end{equation}
This results in a new set of Monte Carlo supports $\{w_{n,+}, x_{n,+}\}^{N_p}_{n=1}$, where $\{w_{n,+}\}^{N_p}_{n=1} = \{w_n f_+(x_+ | x_n)\}^{N_p}_{n=1}$ and $\{x_{n,+}\}^{N_p}_{n=1} = \{x_n\}^{N_p}_{n=1}$.
Note that in this approach, only the weights are updated in accordance with the transition density.
Equivalently, if the sample weights $w_n$ are uniform, the set of predicted sample states can instead be more naturally sampled from the transition density, $x_{n,+} \sim f(x_+ | x_n)$, with uniform weights $w_{n,+} = 1/N_P$.
The implementation of the Monte Carlo sampling of $p_{B,+} \left(x_+, l_+ | Z_J\right)$ is given in Algorithm~\ref{alg::mc_spatial}.
%
\subsection{Monte Carlo Approximation of $r_{B,+}(l_+)$}\label{sec::mc::birth_prob}
The Monte Carlo approximation for the birth probability is directly given by first substituting the proposed approximation of $\bar{\psi}^J_Z(l_+)$ presented in Section~\ref{sec::mc::psi_bar} into Equation~(\ref{eq::birth_prob_hat}) and then substituting this result into Equation~(\ref{eq::birth_prob}).
%
\subsection{Complexity Analysis Using Monte Carlo Approximation}
Sampling newborn labels using the Gibbs sampler proposed in Algorithm~\ref{alg::gibbs} with the proposed Monte Carlo approximation of the sampling distribution in Section~\ref{sec::mc::sample} is $O(mTN_pV^2)$, which is linear in configurable parameters $(T, N_p)$, linear in the maximum number of measurements $m$ and quadratic in the number of sensors $V$.
Constructing the newborn label's spatial distribution and birth probabilities with the proposed Monte Carlo approximations from Section~\ref{sec::mc::spatial} and Section~\ref{sec::mc::birth_prob} respectively is $O(|\mathbb{B}'_+| N_p V)$, which is linear in configurable parameters and the number of birth labels in the truncated birth \ac{LMB} density.
\section{Gaussian Models}\label{sec::gm}
In this section, a closed-form solution to the expected value $\bar{\psi}^J_Z(l_+)$, sampling distribution $p(j^{(s)}|J^{-s})$, spatial distribution $p_{B,+} \left( \cdot, l_+ | Z_J\right)$ and birth probability $r_{B,+}(l_+)$ is derived under a linear Gaussian assumption.
For this section, let the single-sensor measurement likelihood and birth prior be modeled or approximated as a Gaussian and detection probability such that,
\begin{align}
	g^{(s)}(z_{j^{(s)}}^{(s)} | x) &= \mathcal{N}(z_{j^{(s)}}^{(s)}; H^{(s)}x, R^{(s)})\\
	p_B(x, l_+) &= \mathcal{N}(x;\mu_0, P_0)\\
	p_D^{(s)}(x, l_+) &= p_D^{(s)}.
\end{align}
Additionally let the state transition be modeled or approximated as a linear Gaussian system such that,
\begin{equation}
	f_+(x_+ | x) = \mathcal{N}(x_+; Fx, Q).
\end{equation}
%
%
\subsection{Gaussian Evaluation of $\bar{\psi}^J_Z(l_+)$}
\begin{theorem}\label{theorem::gm_psibar}
	Under the Gaussian model from Section~\ref{sec::gm}, the expected value of the pseudolikelihood is,
	\begin{align}
		\begin{split}\label{eq::gauss_psibar}
			\bar{\psi}^J_Z(l_+) &= 	\left[\det(P_0)\det(M_J)\prod\limits_{s'=1}^V\left((2\pi)^{n^{(s')}_z}\det(R^{(s')})\right)\right]^{-\frac{1}{2}}\\
									&\times \left[\prod\limits^V_{\substack{s'=1\\j^{(s') = 0}}} (1-p^{(s')}_D)\right]
									\left[\prod\limits^V_{\substack{s'=1\\j^{(s') > 0}}} \frac{p^{(s')}_D}{\kappa^{(s)}(z^{(s')}_{j^{(s')}})}\right]\Phi_J,
		\end{split}
	\end{align}
	where,
	\begin{align}
		\Phi_J &= \exp \left\{-\frac{1}{2}(c_J - b_J^TM_J^{-1}b_J)\right\}\label{eq::gauss_phi}\\
		M_J &= P_0^{-1} + \sum\limits^V_{\substack{s'=1\\j^{(s')} > 0}}H^{(s'),T}R^{(s'),-1}H^{(s')}\label{eq::gauss_M}\\
		b_J &= P_0^{-1}\mu_0 + \sum\limits^V_{\substack{s'=1\\j^{(s')} > 0}}H^{(s'),T}R^{(s'),-1}z_{j^{(s')}}^{(s')}\label{eq::gauss_bJ}\\
		c_J &= \mu_0^TP^{-1}_0\mu_0 + \sum\limits^V_{\substack{s'=1\\j^{(s')} > 0}}z_{j^{(s')}}^{(s'),T}R^{(s'),-1}z_{j^{(s')}}^{(s')}\label{eq::gauss_cJ}.
	\end{align}
\end{theorem}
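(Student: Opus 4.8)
The plan is to evaluate the inner product in Equation~(\ref{eq::psi_bar}) directly as the integral $\bar{\psi}^J_Z(l_+) = \int p_B(x, l_+)\, \psi^J_Z(x, l_+)\, dx$, exploiting the product structure $\psi^J_Z(x, l_+) = \prod_{s'=1}^V \psi^{s', j^{(s')}}_{Z^{(s')}}(x)$ and the fact that, under the model assumptions of Section~\ref{sec::gm}, every factor is either an $x$-independent constant or a Gaussian in $x$. The result is then a standard product-of-Gaussians integral, so the only genuine work lies in the bookkeeping of normalization constants and in the completing-the-square step.

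First I would partition the sensor indices according to the two cases of the pseudolikelihood in Equation~(\ref{eq::meas_likelihood}). For each missed-detection sensor ($j^{(s')} = 0$) the factor is the $x$-independent constant $1 - p_D^{(s')}$, and for each detected sensor ($j^{(s')} > 0$) the factor is $\frac{p_D^{(s')}}{\kappa^{(s')}(z^{(s')}_{j^{(s')}})}\,\mathcal{N}(z^{(s')}_{j^{(s')}}; H^{(s')}x, R^{(s')})$, since $p_D^{(s')}$ is constant in $x$. All of these leading scalars, together with the Gaussian normalization prefactors of the prior and the detected-sensor likelihoods, can be pulled outside the integral. This produces the two products $\prod_{j^{(s')}=0}(1-p_D^{(s')})$ and $\prod_{j^{(s')}>0}\frac{p_D^{(s')}}{\kappa^{(s')}(z^{(s')}_{j^{(s')}})}$ appearing in Equation~(\ref{eq::gauss_psibar}), along with the determinant and $2\pi$ factors.

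What remains inside the integral is $\int \exp\{-\tfrac{1}{2}Q(x)\}\, dx$, where $Q(x)$ is the sum of the prior quadratic form $(x-\mu_0)^T P_0^{-1}(x-\mu_0)$ and the detected-sensor forms $\sum_{j^{(s')}>0}(z^{(s')}_{j^{(s')}} - H^{(s')}x)^T R^{(s'),-1}(z^{(s')}_{j^{(s')}} - H^{(s')}x)$. Expanding these and collecting the terms that are quadratic, linear, and constant in $x$ yields exactly the matrix $M_J$ of Equation~(\ref{eq::gauss_M}), the vector $b_J$ of Equation~(\ref{eq::gauss_bJ}), and the scalar $c_J$ of Equation~(\ref{eq::gauss_cJ}), so that $Q(x) = x^T M_J x - 2 x^T b_J + c_J$. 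Completing the square rewrites this as $(x - M_J^{-1}b_J)^T M_J (x - M_J^{-1}b_J) + (c_J - b_J^T M_J^{-1} b_J)$, which isolates the $x$-independent factor $\Phi_J$ of Equation~(\ref{eq::gauss_phi}).

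Finally I would evaluate the remaining Gaussian integral with the standard identity $\int \exp\{-\tfrac{1}{2}(x-\nu)^T M_J (x-\nu)\}\, dx = (2\pi)^{n_x/2}\det(M_J)^{-1/2}$ (with $n_x$ the state dimension), which supplies the $\det(M_J)^{-1/2}$ term and cancels the $(2\pi)^{n_x/2}$ coming from the prior normalization. Reassembling the surviving determinant and $2\pi$ factors then reproduces the bracketed normalizer in Equation~(\ref{eq::gauss_psibar}), completing the derivation. I expect the main obstacle to be purely clerical rather than conceptual: tracking precisely which $(2\pi)$ and determinant factors cancel against the prior normalization versus which persist, and keeping the detected-sensor sums defining $M_J$, $b_J$, and $c_J$ consistent throughout the completing-the-square manipulation.
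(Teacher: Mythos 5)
Your proposal is correct and follows essentially the same route as the paper, which simply packages the product-of-Gaussians expansion, the completing-the-square step, and the Gaussian integral into Lemmas~\ref{lemma::gm1} and~\ref{lemma::gm2} (with Lemma~\ref{lemma::M_positive_definite} supplying the positive definiteness of $M_J$ that justifies completing the square and the convergence of the integral). The only detail worth making explicit in your write-up is that invertibility of $M_J$, which you use implicitly, needs the positive-definiteness argument.
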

The proof of Theorem~\ref{theorem::gm_psibar} is provided in Appendix~\ref{sec::appendix::proof_gm_theorem0}.
%
\subsection{Gaussian Evaluation of $p(j^{(s)}|J^{-s})$}
\begin{theorem}\label{theorem::gm_sample_distr}
	Under the Gaussian model from Section~\ref{sec::gm}, the sampling distribution is,
	\begin{subequations}\label{eq::gm_sampling}
		\begin{align}
			&p(j^{(s)} = 0|J^{-s}) \propto \left(1 - p_D^{(s)}\right)\det(M_{J^{-s}})^{-\frac{1}{2}}\Phi_{J^{-s}}\label{eq::gm_sampling_eq0}\\
			&\begin{aligned}\label{eq::gm_sampling_gt0}
				p(j^{(s)} > 0|J^{-s}) \propto
					&\left[(2\pi)^{n^{(s)}_z}\det(M_J)\det(R^{(s)})\right]^{-\frac{1}{2}}\\
					&\times \left(1 - r_A(j^{(s)})\right)
					\left[\frac{p^{(s)}_D}{\kappa^{(s)}(z^{(s)}_{j^{(s)}})}\right]
					\Phi_J
			\end{aligned}
		\end{align}
	\end{subequations}
	Where $M_{J^*}$ and $\Phi_{J^*}$ are given by Equation~(\ref{eq::gauss_M}) and Equation~(\ref{eq::gauss_phi}) respectively over the domain $J^*$.
\end{theorem}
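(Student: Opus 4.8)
The plan is to substitute the closed-form Gaussian expression for $\bar{\psi}^J_Z(l_+)$ from Theorem~\ref{theorem::gm_psibar} into the conditional-likelihood proportionality $p(j^{(s)}|J^{-s}) \propto (1 - r_A(j^{(s)}))\bar{\psi}^J_Z(l_+)$ of Theorem~\ref{prop::cnd_likelihood}, and then collect every factor that does not depend on the free index $j^{(s)}$ into the normalizing constant. Because $J^{-s}$ is held fixed while $j^{(s)}$ is sampled, the only quantities in $\bar{\psi}^J_Z(l_+)$ that vary with $j^{(s)}$ are the precision matrix $M_J$, the exponent $\Phi_J$, the per-sensor pseudolikelihood factor (either $1 - p_D^{(s)}$ or $p_D^{(s)}/\kappa^{(s)}(z^{(s)}_{j^{(s)}})$), and the Gaussian measurement normalization $(2\pi)^{n^{(s)}_z}\det(R^{(s)})$, which contributes to the product in Equation~(\ref{eq::gauss_psibar}) only for sensors with a non-missed index. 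Every remaining factor, namely $\det(P_0)$ together with each sensor-$s'$ contribution for $s' \neq s$, is constant in $j^{(s)}$ and is absorbed into the proportionality constant.

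I would then split into the two cases of the claim. For $j^{(s)} = 0$, inspection of Equations~(\ref{eq::gauss_M})--(\ref{eq::gauss_cJ}) shows that the sensor-$s$ term is excluded from the sums defining $M_J$, $b_J$, and $c_J$, so $M_J = M_{J^{-s}}$ and hence $\Phi_J = \Phi_{J^{-s}}$; the measurement-normalization product omits sensor $s$ and drops out as a constant; the pseudolikelihood contributes $1 - p_D^{(s)}$; and by the convention $r_A(0) = 0$ the prefactor $1 - r_A(j^{(s)})$ equals one. Collecting the surviving factors yields Equation~(\ref{eq::gm_sampling_eq0}). For $j^{(s)} > 0$, the sensor-$s$ term is now present in $M_J$, $b_J$, and $c_J$, so the full $M_J$ and $\Phi_J$ appear; the measurement-normalization product now includes $(2\pi)^{n^{(s)}_z}\det(R^{(s)})$; and the pseudolikelihood contributes $p_D^{(s)}/\kappa^{(s)}(z^{(s)}_{j^{(s)}})$. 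Multiplying by the association prefactor $1 - r_A(j^{(s)})$ recovers Equation~(\ref{eq::gm_sampling_gt0}).

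The main obstacle is purely a matter of careful bookkeeping: correctly deciding which factors in the product-over-sensors form of Theorem~\ref{theorem::gm_psibar} genuinely depend on $j^{(s)}$. The delicate point is the Gaussian normalization $(2\pi)^{n^{(s)}_z}\det(R^{(s)})$; since it enters the product only for non-missed indices, it is a true function of $j^{(s)}$ (present when $j^{(s)} > 0$, absent when $j^{(s)} = 0$) and must be retained in the $j^{(s)} > 0$ branch rather than absorbed into the common constant, whereas the analogous normalization factors for the fixed indices in $J^{-s}$ must be absorbed. Once this distinction is made, the remainder is routine algebraic regrouping and the two branches assemble directly into Equations~(\ref{eq::gm_sampling_eq0}) and (\ref{eq::gm_sampling_gt0}).
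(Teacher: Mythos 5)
Your proposal is correct and follows essentially the same route as the paper's proof: substitute the closed form of $\bar{\psi}^J_Z(l_+)$ from Theorem~\ref{theorem::gm_psibar} into Theorem~\ref{prop::cnd_likelihood}, absorb all factors independent of $j^{(s)}$ into the normalizing constant, and split on whether the sensor-$s$ index is a missed detection (using $r_{A}(0)=0$ and the collapse $M_J = M_{J^{-s}}$, $\Phi_J = \Phi_{J^{-s}}$ in that case). You also correctly identify the one delicate bookkeeping point the paper flags, namely that $\det(M_{J^{-s}})$ and $\Phi_{J^{-s}}$ must be retained rather than absorbed because they cannot be factored out of $\det(M_J)$ and $\Phi_J$ in the $j^{(s)}>0$ branch.
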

The proof of Theorem~\ref{theorem::gm_sample_distr} can be found in Appendix~\ref{sec::appendix::proof_gm_theorem1}.
%
\subsection{Gaussian Evaluation of $p_{B,+}(\cdot, l_+ | Z_J)$}
\begin{theorem}\label{theorem::gm_spatial_distr}
	Under the Gaussian model from Section~\ref{sec::gm}, the spatial distribution is,
	\begin{equation}\label{eq::gm_spatial}
		p_{B,+} \left( x_+, l_+ | Z_J\right) = \mathcal{N}(x_+; \mu', P'),
	\end{equation}
	where,
	\begin{align*}
		&\mu' = FM_J^{-1}b_J,
		&&P'= FM_J^{-1}F^T + Q.
	\end{align*}
	Where $M_{J}$ and $b_J$ are given by Equation~(\ref{eq::gauss_M}) and Equation~(\ref{eq::gauss_bJ}) respectively.
\end{theorem}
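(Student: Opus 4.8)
The plan is to prove the result in two stages that mirror the two-step structure of Equations~(\ref{eq::spatial_distr}) and~(\ref{eq::pred_birth_post}): first I would show that the Bayes-updated density $p_B(x, l_+ | Z_J)$ is itself Gaussian in information form, and then I would push it through the linear-Gaussian transition density via the Chapman--Kolmogorov integral. Both stages are standard conjugate-Gaussian manipulations, so the work is almost entirely bookkeeping; the proof of Theorem~\ref{theorem::gm_psibar} has in fact already performed the core algebra.

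For the first stage, I would substitute the Gaussian prior $p_B(x, l_+) = \mathcal{N}(x; \mu_0, P_0)$ and the pseudolikelihood $\psi^J_Z$ into the numerator of Equation~(\ref{eq::spatial_distr}). The pseudolikelihood factors across sensors: each missed-detection sensor contributes the constant $1 - p_D^{(s)}$, and each detecting sensor contributes $\frac{p_D^{(s)}}{\kappa^{(s)}(z^{(s)}_{j^{(s)}})}\,\mathcal{N}(z^{(s)}_{j^{(s)}}; H^{(s)}x, R^{(s)})$. Since the missed-detection and clutter factors are independent of $x$, they affect only the normalizing constant. Collecting the $x$-dependent quadratic, linear, and constant terms in the exponent reproduces exactly the matrices $M_J$, $b_J$, and $c_J$ of Equations~(\ref{eq::gauss_M})--(\ref{eq::gauss_cJ}), so the numerator is proportional to $\exp\{-\tfrac{1}{2}(x^T M_J x - 2 x^T b_J + c_J)\}$. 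Completing the square yields $p_B(x, l_+ | Z_J) = \mathcal{N}(x; M_J^{-1} b_J, M_J^{-1})$, while the residual constant $\exp\{-\tfrac{1}{2}(c_J - b_J^T M_J^{-1} b_J)\} = \Phi_J$ of Equation~(\ref{eq::gauss_phi}), together with the missed-detection and clutter prefactors, is absorbed into $\bar{\psi}^J_Z(l_+)$, consistent with Theorem~\ref{theorem::gm_psibar}.

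For the second stage, I would apply the Chapman--Kolmogorov integral of Equation~(\ref{eq::pred_birth_post}) with the linear-Gaussian transition $f_+(x_+ | x) = \mathcal{N}(x_+; Fx, Q)$ acting on the Gaussian $\mathcal{N}(x; M_J^{-1} b_J, M_J^{-1})$ obtained above. This is the standard Kalman prediction marginalization: integrating a Gaussian transition against a Gaussian input produces a Gaussian with mean $F\mu$ and covariance $F P F^T + Q$. Substituting $\mu = M_J^{-1} b_J$ and $P = M_J^{-1}$ gives $\mu' = F M_J^{-1} b_J$ and $P' = F M_J^{-1} F^T + Q$, completing the proof.

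There is no deep obstacle here; the only points requiring care are ensuring that $M_J$ is invertible---which holds because $P_0^{-1} \succ 0$ forces $M_J \succ 0$ regardless of how many sensors detect---and verifying that every factor independent of $x$ cancels against the normalizer $\bar{\psi}^J_Z(l_+)$, so that the updated density integrates to one before prediction. These checks guarantee that the closed-form mean and covariance are well defined for any multi-sensor index tuple $J$.
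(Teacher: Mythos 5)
Your proposal is correct and follows essentially the same route as the paper: cancel the $x$-independent missed-detection and clutter factors between numerator and denominator of Equation~(\ref{eq::spatial_distr}), complete the square using $M_J$, $b_J$, $c_J$ (the paper packages this as Lemmas~\ref{lemma::gm1} and~\ref{lemma::gm2}) to obtain $\mathcal{N}(x; M_J^{-1}b_J, M_J^{-1})$, and then apply the standard linear-Gaussian Chapman--Kolmogorov prediction. Your explicit check that $M_J \succ 0$ matches the paper's Lemma~\ref{lemma::M_positive_definite}.
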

The proof of Theorem~\ref{theorem::gm_spatial_distr} can be found in Appendix~\ref{sec::appendix::proof_gm_spatial_distr}.
The result of Theorem~\ref{theorem::gm_spatial_distr} is intuitive because of the conjugate prior nature of the Gaussian distribution.
The expression $\mu'$ is the predicted multi-sensor generalized least squares solution with a prior $p_B$ and $P'$ is the estimate covariance of $\mu'$~\cite[Chapter~2.1 \& 2.4]{Crassidis2011}.
Hence, newborn components in the birth set are the generalized least squares solution of the non-missed detection elements in the measurement tuple predicted to the next time step.
%
\subsection{Gaussian Evaluation of $r_{B,+}(l_+)$}
The Gaussian approximation for the birth probability is directly given by substituting the expression for $\bar{\psi}^J_Z(l_+)$ from Equation~(\ref{eq::gauss_psibar}) into Equation~(\ref{eq::birth_prob_hat}) and replacing this value for $\hat{r}_{B,+}$ into Equation~(\ref{eq::birth_prob}).
%
\subsection{Complexity Analysis Using Gaussian Models}
Worst case evaluation of the sampling distribution given by Equation~(\ref{eq::gm_sampling}) requires $V + 3$ matrix inversions, $6V + 5$ matrix multiplications and $3V + 4$ matrix additions assuming the availability of reuse for terms between Equations~(\ref{eq::gm_sampling_eq0}) and Equations~(\ref{eq::gm_sampling_gt0}).
Using this sampling distribution in Algorithm~\ref{alg::gibbs} results in $O(mTV^2)$ matrix operations to sample labels for the truncated birth \ac{LMB} density.
To reduce the runtime complexity, if the birth prior parameters $\mu_0, P_0$ and the measurement covariances $R^{(s)}$ are not time varying, the inverses $P_0^{-1}$ and $(R^{(1),-1}, \dots, R^{(V),-1})$ can be precomputed and stored at initialization.
This removes all but 2 of matrix inversions that need to be computed at each Gibbs sampler iteration ($M^{-1}_J$ and $M^{-1}_{J^{-s}}$).
Similarly, the terms $P^{-1}_0\mu_0$, $H^{(s'),T}R^{(s'),-1}$ and $H^{(s'),T}R^{(s'),-1}H^{(s')}$ can be precomputed for all sensors.

Construction of the spatial distribution given by Equation~(\ref{eq::gm_spatial}) requires $V + 1$ matrix inversion, $4V+5$ matrix multiplications and $2V + 3$ matrix additions.
In addition to aforementioned precomputations, to reduce complexity, the values of $M_J^{-1}$ and $M_J^{-1}b_J$ computed during the evaluation of $p(j^{(s)}|J^{-s})$ in Equation~(\ref{eq::gm_sampling}), can be reused by Equation~(\ref{eq::gm_spatial}) resulting in only 3 matrix multiplications and 1 matrix addition to construct the spatial distribution.
Construction of the birth probability requires $V + 1$ matrix inversions, $6V + 5$ matrix multiplications and $6V + 3$ matrix additions.
To reduce complexity, the term $b^T_JM^{-1}_Jb_J$ can be stored when evaluating the spatial distribution in Equation~(\ref{eq::gm_sampling}) resulting in Equation~(\ref{eq::gm_spatial}) only needing $2V$ matrix multiplications and 1 matrix addition to construct the spatial distribution.
Using this in Algorithm~\ref{alg::gibbs} results in $O(|\mathbb{B}'_+| V)$ matrix operations to construct the truncated birth \ac{LMB} density.

\section{Simulation Examples}\label{sec::sim}

In this section, we demonstrate the truncation accuracy and scalability of the proposed multi-sensor adaptive birth procedure through two simulated scenarios.
In both scenarios, the target birth locations are unknown \textit{a priori} rendering it impossible to accurately describe a static birth prior.
Scenario 1 shows the performance of the proposed Monte Carlo approximation described in Section \ref{sec::mc} and Scenario 2 highlights the performance of the Gaussian solution described in Section \ref{sec::gm}.
The multi-sensor adaptive birth models were incorporated into the multi-sensor iterated corrector formulation of the \ac{LMB} and $\delta$-\ac{GLMB} filters and the results are compared against the same filters using a uniform birth procedure.

Both scenarios contained a time-varying number of targets.
The target state comprised of planar 2D position and velocity, $x = [p_{x}, \dot{p}_x, p_y, \dot{p}_y]^T$.
The targets followed a constant velocity transition model, $f_+(x_+|x) = \mathcal{N}(x_+; Fx, Gw)$ where the transition matrix and process noise matrix were given by,
\begin{equation*}
    F =
    \begin{bmatrix}
            1 & \Delta\\
            0 &     1
    \end{bmatrix},\qquad
    G = 
    \begin{bmatrix}
        \frac{\Delta^2}{2}\\
        \Delta
    \end{bmatrix},
\end{equation*}
respectively and $\Delta$ is the discrete-time sampling interval \cite{Li2003}.
The discrete-time x and y acceleration white noises were, $w = [5, 5]^T\;m/s^2$.
Both simulations were run for 100 seconds with $\Delta = 1$ second.
The survival probability for each target was set to $p_s(x,l) = 0.99$.
In both scenarios, the detection probability for all sensors was modeled as constant with $p_D^{(s)}(x,l) = 0.95$.
Clutter was modeled as Poisson distributed with intensity $\kappa^{(s)}(\mathbb{Z}^{(s)}) = \lambda_c^{(s)} \mathcal{U}(\mathbb{Z}^{(s)})$
where $\mathcal{U}(\mathbb{Z}^{(s)})$ is the uniform distribution over $\mathbb{Z}^{(s)}$, and
$\lambda^{(s)}_c = 15$.

The uniform birth model used for comparison generated 100 labeled birth components at each time step, with means uniformally distributed throughout 2D state space.
The birth probability of all components was $r_{B,+}(l_+) = 0.1$.
The spatial distribution of each birth component was modeled as a Gaussian, $p_{B,+}(x_+, l_+) = \mathcal{N}(x_+; m_{B,+}, P_{B,+})$, with
\begin{align}
    \begin{split}
        m_{B,+} \in &\{-2000, -1000, \dots, 11000, 12000\}\\
                \times &\{0, 0, \dots, 0, 0\}\\
                \times &\{-2000, -1000, \dots, 11000, 12000\}\\
                \times &\{0, 0, \dots, 0, 0\},\\
    \end{split}
\end{align}
and $P_{B,+} = \text{diag}(\sigma_{p}^2, \sigma_{\dot{p}}^2, \sigma_{p}^2, \sigma_{\dot{p}}^2)$.
For Scenario 1 and Scenario 2, $\sigma_{p}$ was 300 m and 250 m and  $\sigma_{\dot{p}}$ was 20 m/s and 50 m/s respectively.
For the Monte Carlo filters in Scenario 1, 1000 samples were drawn from each Gaussian component.

The adaptive birth Gibbs sampler used 100 Gibbs Samples for Scenario 1 and 1000 Gibbs Samples for Scenario 2 with a simple restarting procedure that reset the current solution to the all-missed detection measurement tuple every 5 and 100 iterations respectively to encourage exploration.
Scenario 1 used significantly fewer Gibbs samples to reduce simulation time and did not result in a significant impact on tracker performance.
A maximum birth probability $r_{B, max} = 1.0$, expected birth rate of $\lambda_{B,+} = 0.5$ and maximum association probability $\tau = 0.01$ was used for both scenarios.
To prevent components being born using only a single sensor measurement and missed detections from all remaining sensors, e.g., $J = (1, 0, 0, 0)$, only sampled measurement tuples that had at least 2 non-missed detection measurement indexes were used to construct the newborn birth set.
For example, if the measurement tuples $J = (1, 1, 0, 0)$, $J = (1, 0, 0, 0)$, $J = (0, 1, 0, 0)$ were sampled, then only $J = (1, 1, 0, 0)$ would be used to birth a newborn component.
This significantly reduced the number of newborn components that were provided to the tracker that had similar spatial distributions.
\begin{figure}[t]
    \subfloat[Scenario 1]{\includegraphics[width=0.48\textwidth, height=6cm]{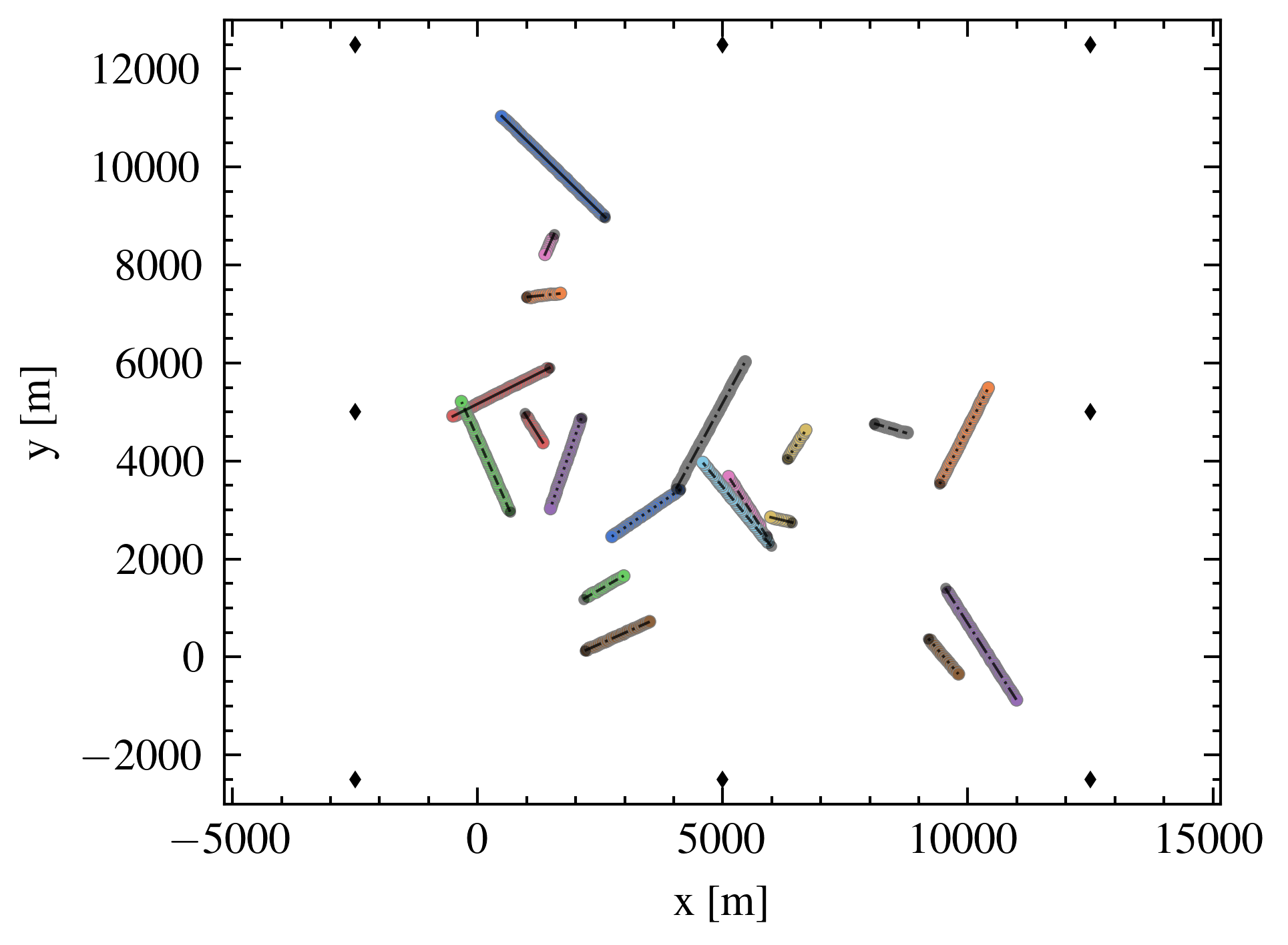}\label{fig::qual::smc}}
    \newline
    \subfloat[Scenario 2]{\includegraphics[width=0.48\textwidth, height=6cm]{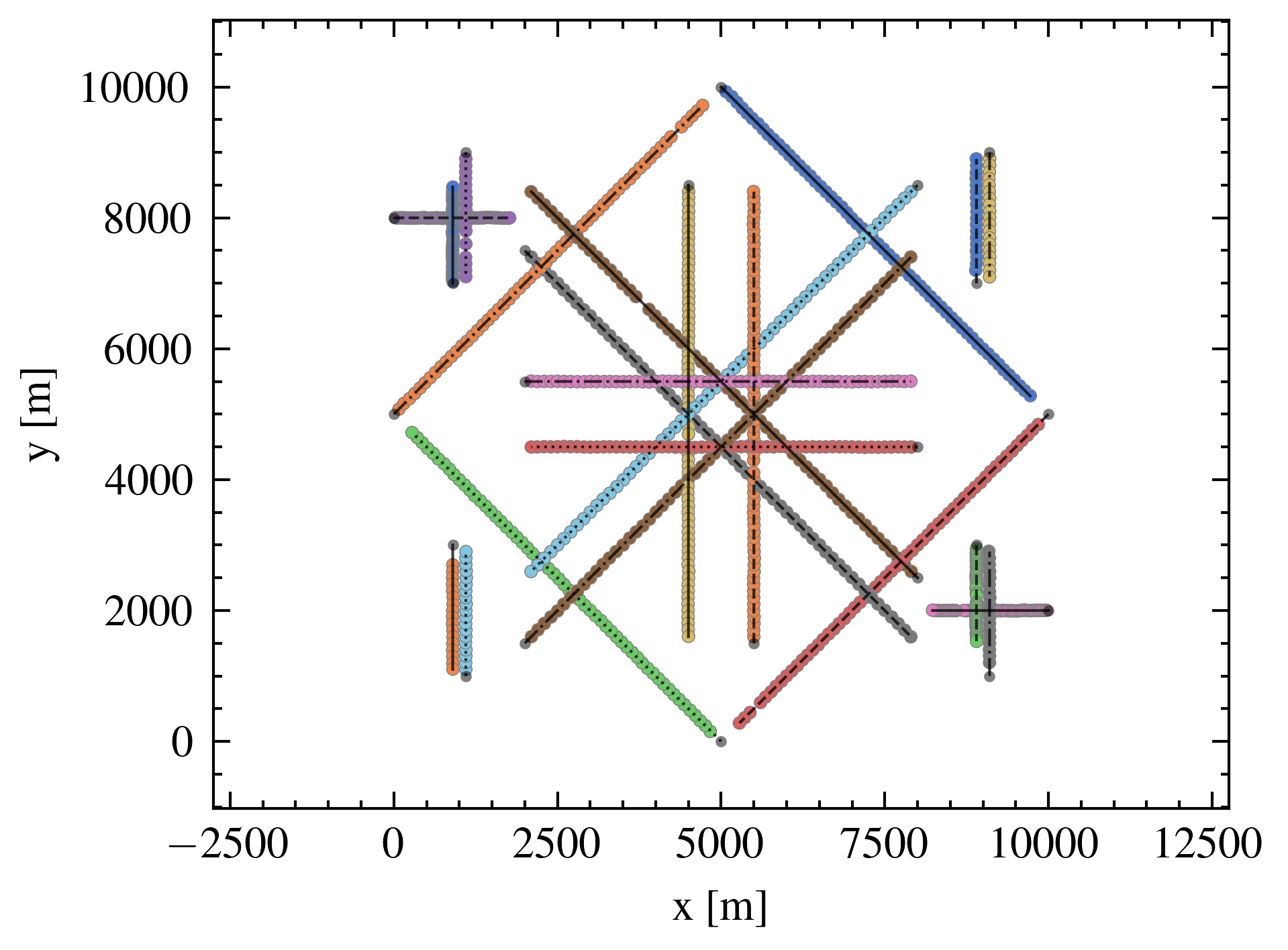}\label{fig::qual::gm}}
    \caption{Single observation of target trajectories (black lines), their birth locations (black circles) and labeled state estimates of a \ac{LMB} (Scenario 1) and $\delta$-\ac{GLMB} (Scenario 2) filter using the proposed adaptive birth model (colored circles where each color indicates a unique label). Diamonds in Scenario 1 represent the locations of the bearing-range sensors.}
    \label{fig::qual}
\end{figure}
%
\subsection{Scenario 1: Bearing-range Tracking Example}\label{sec::sim::bearing}
The first scenario contained 8 bearing-range sensors tracking a randomly generated number of targets.
Every 5 seconds, a number between 0 and 3 was sampled uniformly to determine how many true targets were born.
The positions of these targets were uniformly sampled in position state space over the domain $[0, 10000]\;m$.
The speed of each target was fixed at $50\;m/s$, and the velocity heading was uniformly sampled in the domain $[-\pi, \pi]\;rad$.
The randomness of the trajectories in this scenario render it difficult to describe a static birth prior.
Figure \ref{fig::qual::smc} shows a single observation of the randomly generated trajectories and state estimate results from the \ac{LMB} filter using the proposed Monte Carlo multi-sensor birth model.

If an object was detected, the bearing-range measurement
$z^{(s)} = [\alpha^{(s)}, r^{(s)}]^T$
was observed according to the single-target measurement likelihood
$g(z^{(s)}|x) = \mathcal{N}(z^{(s)}; h^{(s)}(z^{(s)}, x^{(s)}), R^{(s)})$
where
$x^{(s)} = [p^{(s)}_x, p^{(s)}_y]^T$
is the position of the sensor, $R^{(s)} = \text{diag}(0.25^2, 10^2)$ and,
\begin{align}
    h_\alpha^{(s)}(x, x^{(s)}) &= \arctan\left(\frac{p^{(s)}_x - p_x}{p^{(s)}_y - p_y}\right)\\
    h_r^{(s)}(x, x^{(s)}) &= \sqrt{(p^{(s)}_x - p_x)^2 + (p^{(s)}_y - p_y)^2}.
\end{align}

An uninformative uniform prior distribution $p^o_{B}(x_o, l_+) = \mathcal{U}(\mathbb{X}_o)$ was used where $\mathbb{X}_o$ is the observable portion of state space $\mathbb{X}$.
The unobservable velocities were sampled from zero-mean Gaussian distribution $p^u_{B}(x_u, l_+) = \mathcal{N}(x_u; 0, \begin{bsmallmatrix}\sigma_{\dot{p}}^2 & 0\\ 0 & \sigma_{\dot{p}}^2\end{bsmallmatrix})$ with $\sigma_{\dot{p}} = 20\;m/s^2$.

\subsection{Scenario 2: Linear Position Tracking Example}\label{sec::sim::position}

The second scenario contained 8
linear XY-position sensors tracking a maximum of 22 simultaneous targets.
The birth locations of each target were fixed, but are sparse and no two targets were born from the same location.
Figure~\ref{fig::qual::gm} shows the target trajectories and a single observation of the state estimator results from the $\delta$-\ac{GLMB} filter using the proposed Gaussian multi-sensor birth model.

If an object was detected, the XY-position measurement $z^{(s)} = [p_x, p_y]^T$ was observed according to the single-target measurement likelihood $g(z^{(s)}|x) = \mathcal{N}(z^{(s)}; H^{(s)}x, R^{(s)})$ with, $R^{(s)} = \text{diag}(10^2, 10^2)$ and,
$H^{(s)} = \begin{bsmallmatrix}1 & 0\\0 & 1\end{bsmallmatrix} \otimes \begin{bsmallmatrix}1 & 0\\0 & 0\end{bsmallmatrix}$ where $\otimes$ denotes the Kronecker product.

The birth prior was modeled as a Gaussian with an uninformative covariance in position such that, $p_B(x, l_+) = \mathcal{N}(x; \mu_0, P_0)$ with $\mu_0 = \left[0, 0, 0, 0\right]^T$ and $P_0 = \text{diag}(100000^2, 50^2, 100000^2, 50^2)$.

\subsection{Results}\label{sec::sim::results}

\begin{figure}[t!]
    \subfloat[Scenario 1]{\includegraphics[width=0.48\textwidth, height=6cm]{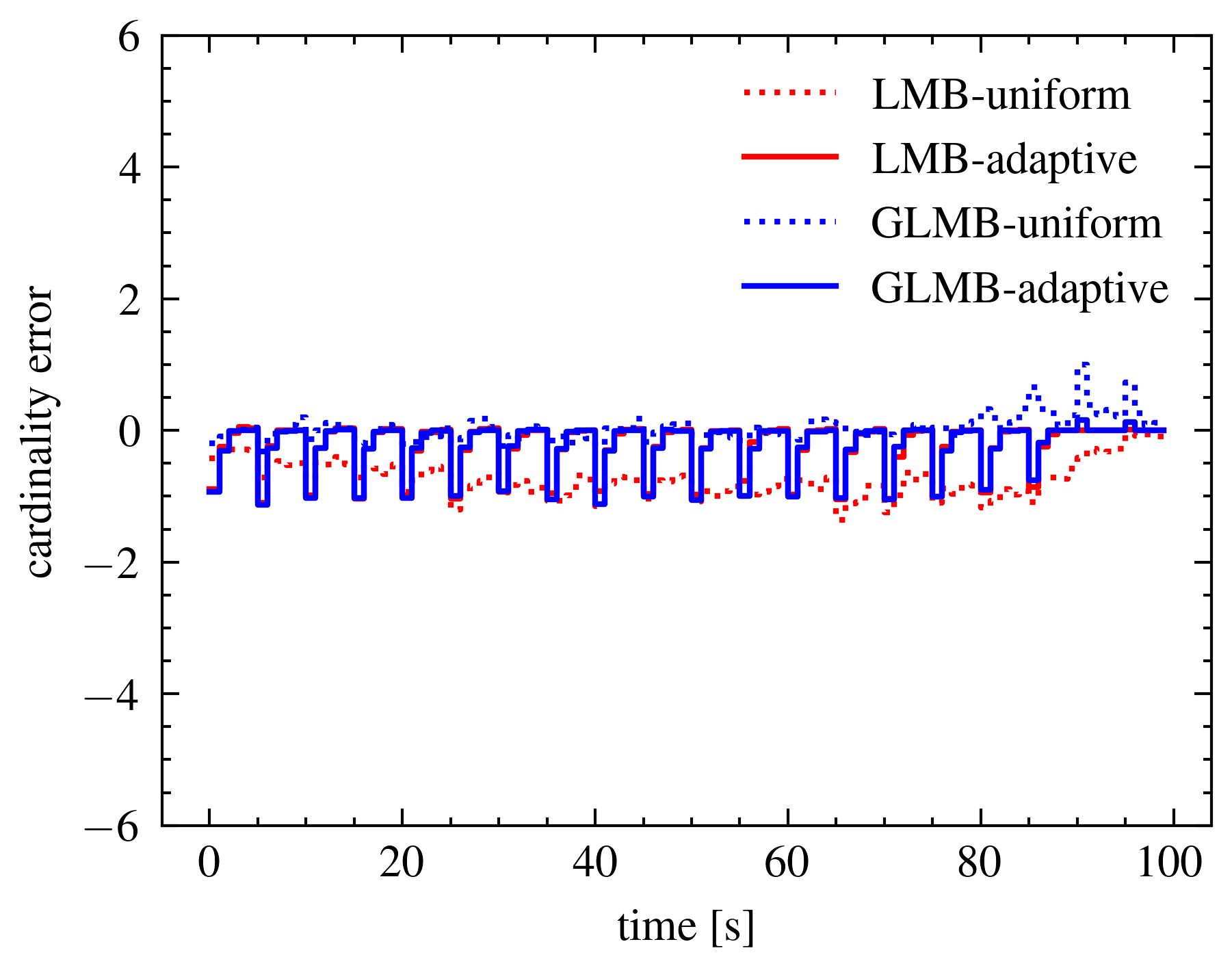}\label{fig::smc::results::card}}
    \newline
    \subfloat[Scenario 2]{\includegraphics[width=0.48\textwidth, height=6cm]{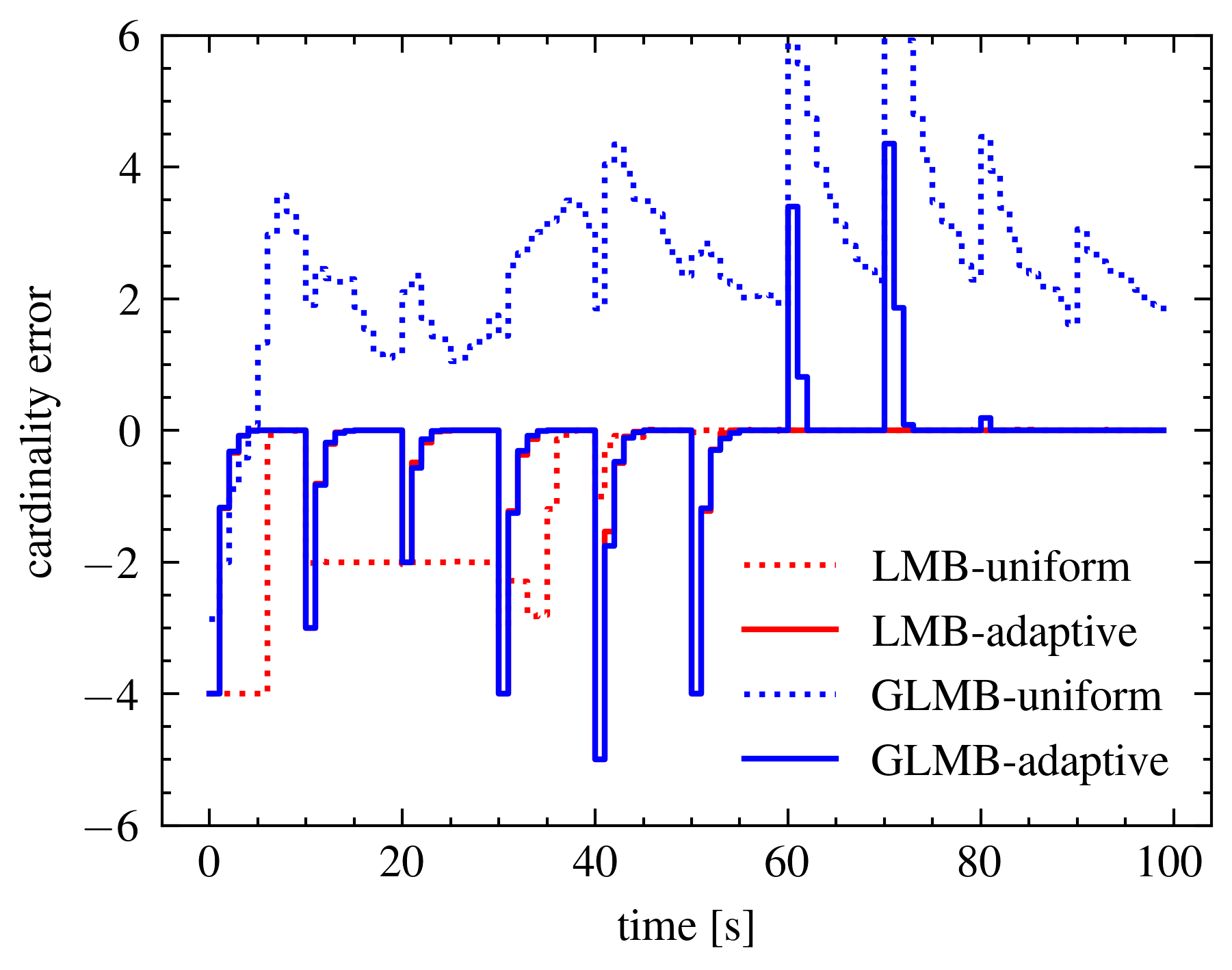}\label{fig::gm::results::card}}
    \caption{Scenario 1 (top) and Scenario 2 (bottom) cardinality errors. Solid and dotted lines represent the average values over 100 Monte Carlo iterations.}
    \label{fig::results::card}
\end{figure}

\begin{figure}[t!]
    \subfloat[Scenario 1]{\includegraphics[width=0.48\textwidth, height=6cm]{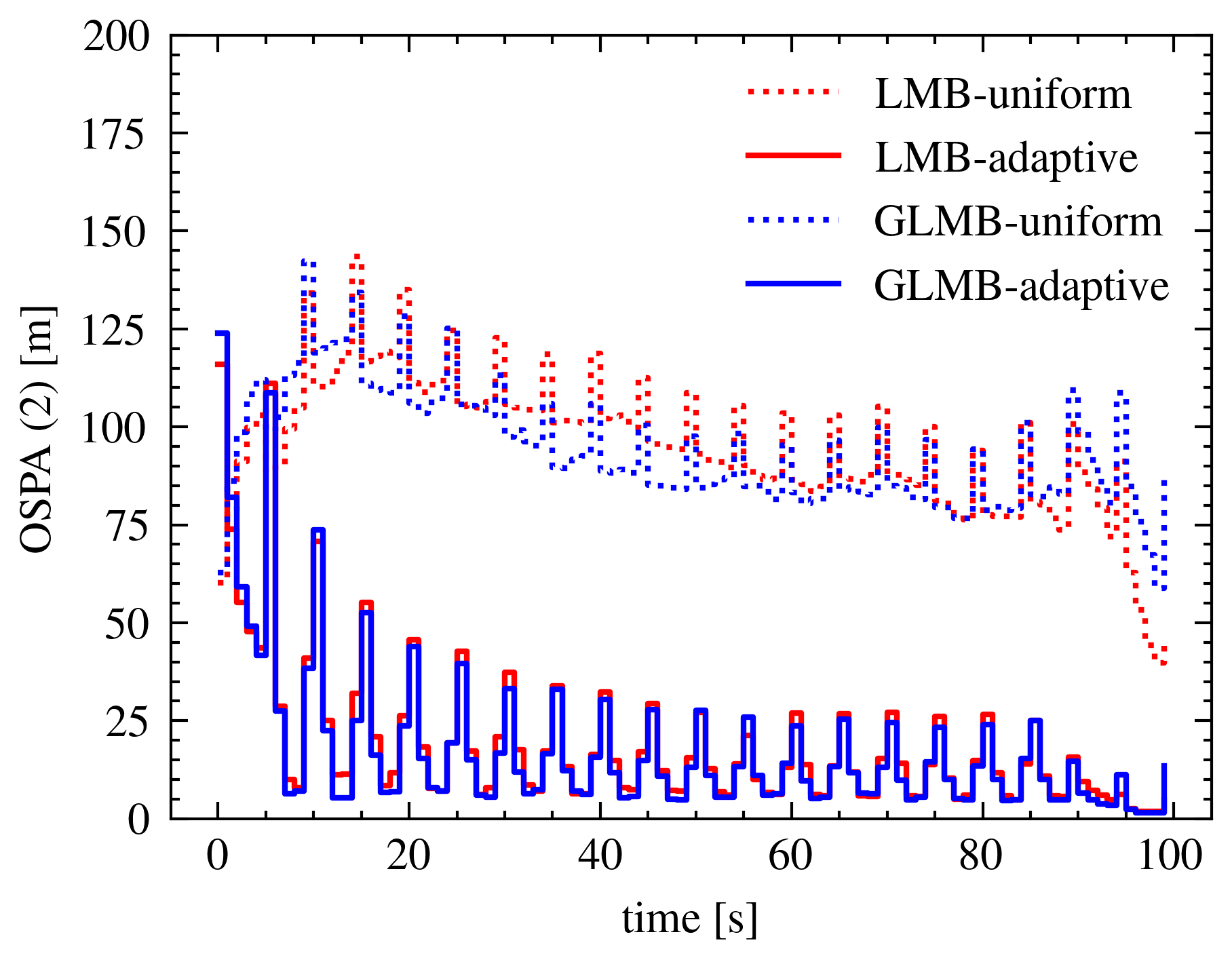}\label{fig::smc::results::ospa}}
    \newline
    \subfloat[Scenario 2]{\includegraphics[width=0.48\textwidth, height=6cm]{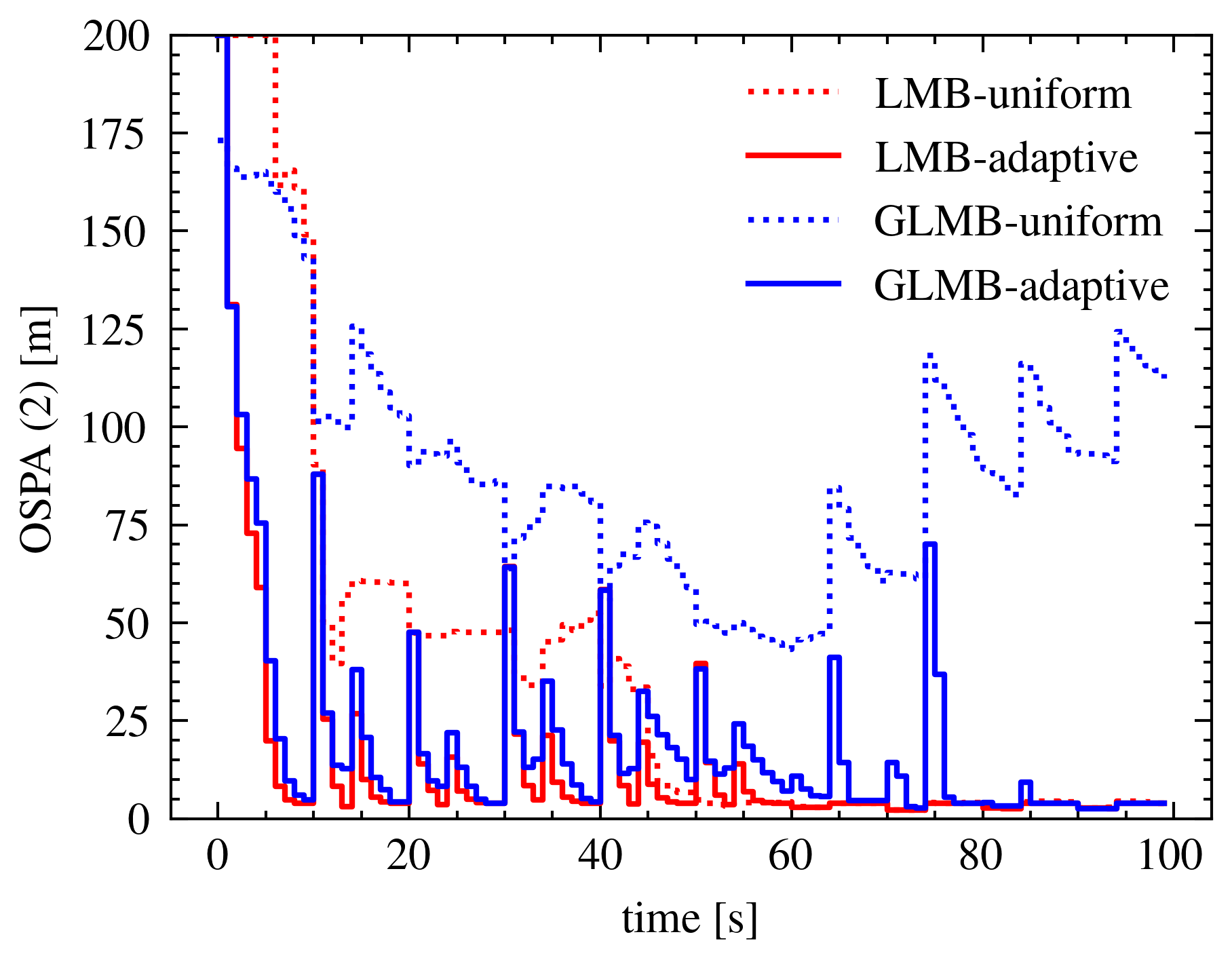}\label{fig::gm::results::ospa}}
    \caption{Scenario 1 (top) and Scenario 2 (bottom) OSPA(2) results. Solid and dotted lines represent the average value over 100 Monte Carlo iterations.}
    \label{fig::results::ospa}
\end{figure}

The cardinality and state estimation accuracy was quantified using the \ac{OSPA}(2) metric \cite{Beard2020}.
The \ac{OSPA}(2) metric was computed using a distance cutoff value of $200\;m$, a distance order of $1.0$, a sliding window length of $5$ and an expanding window weight power of $0$.
Additionally, the cardinality error was computed as the difference between the estimated cardinality and the true cardinality at each time step such that a positive or negative value indicates an overestimate or underestimate of the true cardinality respectively.
The results were averaged over 100 Monte Carlo iterations.

As seen in Figure~\ref{fig::results::card}, the proposed multi-sensor adaptive birth algorithms consistently tracked the correct cardinality in both scenarios.
The short-duration cardinality errors are due to the time-delayed nature of the proposed adaptive birthing procedure which results in at least a 1-time step lag before a target can be born.
These cardinality errors typically do not persist for more than 1 time step.
The two overbiased cardinality spikes in Scenario 2's results using the $\delta$-\ac{GLMB} filter with the adaptive birth model correspond to when several targets simultaneously die in the environment.
This indicates that $\delta$-\ac{GLMB} state extraction method persisted its estimate of these targets for an additional time step, where the \ac{LMB} did not.

In contrast, the uniform birth models resulted in a larger cardinality error.
In general, when using the uniform birth model, the \ac{LMB} filter underbiased the cardinality as opposed to the $\delta$-\ac{GLMB} filter which overbiased the cardinality.
This is likely because of the following.
At each time step, the uniform birth model added 100 newborn components with low existence probabilities.
Because many of these components are far from the true targets, these components will likely be missed detected by one or more sensors.
Since the detection probability is high, this results in very low probability of existence for these labels.
After an update, our implementation of the \ac{LMB} filter prunes components with an existence probability less than $1e-3$ and caps the total number of components to $100$, dropping the lowest weighted components.
Similarly, our implementation of the $\delta$-\ac{GLMB} filter prunes hypotheses with a probability less than $1e-5$ and caps the total number of hypotheses to $1000$.
This results in the \ac{LMB} filter being quicker to remove low weighted birth components where as the $\delta$-\ac{GLMB} persists these as ghost tracks for longer, especially if they correlate with some clutter measurements.

Figure~\ref{fig::results::ospa} shows that the \ac{LMB} and $\delta$-\ac{GLMB} filter using the proposed multi-sensor adaptive birth models outperformed the filters using uniform birth models in terms of the \ac{OSPA}(2) metric.
The filters using the uniform adaptive birth model resulted in a large \ac{OSPA}(2) error, which is mostly driven by the cardinality error discussed above.
In contrast, using the proposed adaptive birth model led to a significantly reduced \ac{OSPA}(2) error.

Table~\ref{tab::num_births} shows a comparison between the maximum possible number of multi-sensor measurement tuples and the size of the newborn birth set after running the proposed Gibbs truncation algorithm.
The results in this table were averaged over time and over Monte Carlo iterations.
With 8 sensors and a sensor clutter rate of 15 clutter returns on average, the number of possible multi-sensor measurement tuples is many orders of magnitude more than what a typical tracking system could handle.
The proposed Gibbs sampler successfully truncated this birth set by several orders of magnitude without sacrificing filter performance.
As opposed to the uniform birth model which generated $100$ newborn components on every time step, the proposed adaptive birth procedure generated approximately an average of $1.40$ and $0.42$ newborn components every time step for Scenario 1 and 2 respectively.

\begin{table}[]
    \caption{Gibbs Truncation Efficiency}\label{tab::num_births}
    \begin{tabular}{|c|c|c|c|}
    \hline
    \textbf{Scenario}  & \textbf{Avg Max Births}  & \textbf{Filter Type} & \textbf{Avg Num Birth Comps} \\ \hline
    \multirow{2}{*}{1} & \multirow{2}{*}{6.15e10}  & LMB                  & 1.37                        \\ \cline{3-4} 
                       &                          & GLMB                 & 1.42                        \\ \hline
    \multirow{2}{*}{2} & \multirow{2}{*}{5.21e11} & LMB                  & 0.42                        \\ \cline{3-4} 
                       &                          & GLMB                 & 0.42                        \\ \hline
    \end{tabular}
\end{table}
\section{Conclusion}\label{sec::conclusions}

This paper provided a formal definition for the multi-sensor multi-object adaptive birth distribution for labeled \ac{RFS} filters.
We then showed that the number of components in this birth distribution is exponential in the number of sensors.
To alleviate this, a truncation criterion is established for a \ac{LMB} birth density.
The proposed truncation criterion is shown to have a bounded L1 error in the \ac{GLMB} posterior density.
Using this truncation criterion, we derived an efficient Gibbs sampler that produces a truncated multi-sensor measurement-generated \ac{LMB} birth density.
We then provided Monte Carlo and Gaussian implementations of our approach and verified our results using two simulated scenarios.
The results of the simulations showed that our proposed approach can accurately birth components in challenging scenarios where static birth models are not feasible.
Additionally, the results showed that our proposed Gibbs sampler can successfully truncate the components to just those that are likely to be from true targets.
Future work is being conducted to expand this approach to unlabeled birth intensities and to investigate the effects of Gibbs tempering and herding \cite{Wolf2020}.

\appendices
\section{Supplemental Lemmas}\label{sec::appendix::lemmas}

\begin{lemma}\label{lemma::M_positive_definite}
The matrix $M_J$ from Equation~(\ref{eq::gauss_M}) is symmetric positive definite.
\end{lemma}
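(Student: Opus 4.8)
The plan is to decompose $M_J$ into a strictly positive definite term plus a sum of positive semidefinite terms, and then invoke the elementary fact that such a sum is positive definite. First I would establish symmetry. Since $P_0$ is a covariance matrix it is symmetric, and therefore so is its inverse $P_0^{-1}$. Each summand satisfies $(H^{(s'),T} R^{(s'),-1} H^{(s')})^T = H^{(s'),T} (R^{(s'),-1})^T H^{(s')} = H^{(s'),T} R^{(s'),-1} H^{(s')}$, because $R^{(s')}$, being a measurement noise covariance, is symmetric and hence so is its inverse. A finite sum of symmetric matrices is symmetric, which establishes the symmetry of $M_J$.

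For positive definiteness, I would take an arbitrary nonzero vector $x$ and evaluate the associated quadratic form,
\[
x^T M_J x = x^T P_0^{-1} x + \sum_{\substack{s'=1 \\ j^{(s')}>0}} (H^{(s')} x)^T R^{(s'),-1} (H^{(s')} x).
\]
Because $P_0$ is a valid covariance matrix it is positive definite, so $P_0^{-1}$ is positive definite and the first term is strictly positive for every $x \neq 0$. Each $R^{(s')}$ is likewise positive definite, hence $R^{(s'),-1}$ is positive definite and every term $(H^{(s')} x)^T R^{(s'),-1} (H^{(s')} x)$ is nonnegative. Adding a strictly positive quantity to nonnegative quantities yields $x^T M_J x > 0$ for all $x \neq 0$, which is precisely positive definiteness.

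The only subtlety worth flagging is that the congruence terms $H^{(s'),T} R^{(s'),-1} H^{(s')}$ need not be positive definite individually, since the observation matrices $H^{(s')}$ may be rank-deficient (indeed, they typically project onto a lower-dimensional observable subspace). Consequently the claim cannot be supported by the measurement terms alone; it is the strictly positive definite prior term $P_0^{-1}$ that rescues the argument. I would present this splitting of the quadratic form as the crux of the proof, since the result is then immediate and no genuine computational difficulty arises.
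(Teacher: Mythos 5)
Your proof is correct and follows essentially the same route as the paper: both split $M_J$ into the strictly positive definite prior term $P_0^{-1}$ plus positive semidefinite congruence terms $H^{(s'),T}R^{(s'),-1}H^{(s')}$ (which may fail to be definite when $H^{(s')}$ is rank-deficient) and conclude via the sum of a positive definite and positive semidefinite matrices. The only cosmetic difference is that you verify the quadratic-form inequality directly while the paper cites the standard fact from Horn and Johnson, and you spell out symmetry slightly more explicitly.
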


\begin{proof}
Since $P_0$ and $R^{(s')}$ are symmetric positive definite by construction, then the inverses $P_0^{-1}$, and $R^{(s'),-1}$ are also symmetric positive definite~\cite[Chapter~7.1]{Horn2012}.
For any non-zero vector $u \in \mathbb{R}^{n^{(s')}_z}$ and by the commutative property of matrices,
\begin{equation*}
u^TH^{(s'),T}R^{(s'),-1}H^{(s')}u = (uH^{(s')})^TR^{(s'),-1}(uH^{(s')}).
\end{equation*}
If $H^{(s')}u$ maps to a non-zero vector, $v \in \mathbb{R}^{n^{(s')}_z}$, then we know $v^TR^{(s'),-1}v>0$, since $R^{(s'),-1}$ is positive definite.
If $H^{(s')}u$ has a null space other than the zeros vector, then $v^TR^{(s'),-1}v=0$.
Therefore, $(uH^{(s')})^TR^{(s'),-1}(uH^{(s')}) \ge 0$ which is symmetric positive semi-definite.
Since the addition of a symmetric positive definite matrix and a symmetric semi-positive definite matrix is positive definite, then $M_J$ must be symmetric positive definite~\cite[Chapter~7.1]{Horn2012}.
\end{proof}

\begin{lemma}
\label{lemma::gm1}
Under the linear Gaussian assumptions in Section~\ref{sec::gm}, the following equivalency holds
\begin{multline*}
	\prod\limits_{\substack{s'=1\\j^{(s')} > 0}}^V g^{(s')}(z_{j^{(s')}}^{(s')} | x)p_{B}(x) =\\
		\left[(2\pi)^{n_x}\det(P_0)\prod\limits_{\substack{s'=1\\j^{(s')} > 0}}^V (2\pi)^{n^{(s')}_z}\det(R^{(s')})\right]^{-\frac{1}{2}}\\\
		\times \exp\left\{-\frac{1}{2} (c_J-b_J^T M_J^{-1} b_J) \right\}\\
		\times \exp\left\{-\frac{1}{2}(x-M_J^{-1}b_J)^TM(x-M_J^{-1}b_J)\right\}
\end{multline*}
where $\det(\cdot)$ denotes the matrix determinant, $n^{(s')}_z$ the dimensionality of the measurement space for sensor $s'$
and $n_x$ is the dimensionality of state space.
$M_J$, $b_J$ and $c_J$ are defined in Theorem~\ref{theorem::gm_psibar}.
\end{lemma}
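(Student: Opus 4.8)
The plan is to multiply out the Gaussian densities in their exponential form and complete the square in $x$. First I would write each factor explicitly: each measurement likelihood contributes a normalization constant $[(2\pi)^{n^{(s')}_z}\det(R^{(s')})]^{-1/2}$ together with the exponential of $-\tfrac{1}{2}(z^{(s')}_{j^{(s')}} - H^{(s')}x)^T R^{(s'),-1}(z^{(s')}_{j^{(s')}} - H^{(s')}x)$, and the prior $p_B$ contributes $[(2\pi)^{n_x}\det(P_0)]^{-1/2}$ times the exponential of $-\tfrac{1}{2}(x-\mu_0)^TP_0^{-1}(x-\mu_0)$. Multiplying the normalization constants immediately yields the prefactor claimed in the lemma, so the remaining work is entirely in the combined exponent.

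Next I would collect the terms inside the single combined exponential, which is $-\tfrac{1}{2}$ times the sum of the individual quadratic forms. Expanding each quadratic and grouping by powers of $x$, the second-order term is $x^T\big(P_0^{-1} + \sum_{j^{(s')}>0} H^{(s'),T}R^{(s'),-1}H^{(s')}\big)x = x^TM_Jx$, the first-order term is $-2x^T\big(P_0^{-1}\mu_0 + \sum_{j^{(s')}>0} H^{(s'),T}R^{(s'),-1}z^{(s')}_{j^{(s')}}\big) = -2x^Tb_J$, and the $x$-independent term is $\mu_0^TP_0^{-1}\mu_0 + \sum_{j^{(s')}>0} z^{(s'),T}_{j^{(s')}}R^{(s'),-1}z^{(s')}_{j^{(s')}} = c_J$, exactly recovering the definitions of $M_J$, $b_J$, and $c_J$ from Theorem~\ref{theorem::gm_psibar}.

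The final step is to complete the square. Invoking Lemma~\ref{lemma::M_positive_definite} to guarantee that $M_J$ is invertible, I would apply the identity $x^TM_Jx - 2x^Tb_J = (x-M_J^{-1}b_J)^TM_J(x-M_J^{-1}b_J) - b_J^TM_J^{-1}b_J$, which relies on the symmetry of $M_J$. Substituting this back splits the exponent into the $x$-independent factor $\exp\{-\tfrac{1}{2}(c_J - b_J^TM_J^{-1}b_J)\}$ and the quadratic factor $\exp\{-\tfrac{1}{2}(x-M_J^{-1}b_J)^TM_J(x-M_J^{-1}b_J)\}$, matching the claim. There is no genuine conceptual obstacle here; the whole argument is a bookkeeping exercise in Gaussian algebra, and the only substantive input is the positive definiteness (hence invertibility) of $M_J$ supplied by Lemma~\ref{lemma::M_positive_definite}, which is precisely what legitimizes the appearance of $M_J^{-1}$ throughout.
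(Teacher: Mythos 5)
Your proposal is correct and follows essentially the same route as the paper's proof: expand the Gaussian densities, group the exponent by powers of $x$ to recover $M_J$, $b_J$, and $c_J$, and complete the square using the symmetric positive definiteness of $M_J$ from Lemma~\ref{lemma::M_positive_definite}. No substantive differences to note.
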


\begin{proof}
Given that the single-sensor measurement likelihood and birth prior are Gaussian densities, the product can be expanded as,
\begin{multline*}
	\prod\limits_{\substack{s'=1\\j^{(s')} > 0}}^V g^{(s')}(z_{j^{(s')}}^{(s')} | x)p_B(x) =\\
		\left[\prod\limits_{\substack{s'=1\\j^{(s')} > 0}}^V (2\pi)^{-n^{(s')}_z/2}\det(R^{(s')})^{-1/2}\right.\\
		\times\left.\exp\left\{-\frac{1}{2} (z_{j^{(s')}}^{(s')} - H^{(s')}x)^TR^{(s'),-1}(z_{j^{(s')}}^{(s')} - H^{(s')}x)\right\}\right]\\
		\times\left[(2\pi)^{-n_x/2}\det(P_0)^{-1/2}\exp \left\{-\frac{1}{2}(x-\mu_0)^TP_0^{-1}(x-\mu_0)\right\}\right].\\
\end{multline*}
Rearranging the variables and leveraging the product properties of exponentials,
\begin{multline}\label{eq::lemma1::expanded}
	\prod\limits_{\substack{s'=1\\j^{(s')} > 0}}^V g^{(s')}(z_{j^{(s')}}^{(s')} | x)p_B(x) =\\
		\left[(2\pi)^{-n_x/2}\det(P_0)^{-1/2}\right]\left[\prod\limits_{\substack{s'=1\\j^{(s')} > 0}}^V (2\pi)^{-n^{(s')}_z/2}\det(R^{(s')})^{-1/2}\right]\\
		\times\exp \Bigg\{-\frac{1}{2}\Bigg(\sum\limits_{\substack{s'=1\\j^{(s')} > 0}}^V(z_{j^{(s')}}^{(s')} - H^{(s')}x)^TR^{(s'),-1}(z_{j^{(s')}}^{(s')} - H^{(s')}x)\\
		+ (x-\mu_0)^TP_0^{-1}(x-\mu_0)\Bigg)\Bigg\}.
\end{multline}
Let $\Psi$ be a temporary variable denoting the arguments in the exponential.
Factoring $\Psi$ into quadratic form,
\begin{multline*}
\Psi = x^TP_0^{-1}x - 2\mu_0^TP_0^{-1}x + \mu_0^TP^{-1}\mu_0\\
+\sum\limits_{\substack{s'=1\\j^{(s')} > 0}}^V  \Bigg(z_{j^{(s')}}^{(s'),T}R^{(s'),-1}z_{j^{(s')}}^{(s')} - 2z_{j^{(s')}}^{(s'),T}R^{(s'),-1}H^{(s')}x\\
+x^TH^{(s')}R^{(s'),-1}H^{(s')}x\Bigg).
\end{multline*}
Further, grouping like terms in $x$,
\begin{multline*}
\Psi = x^T\left(P_0^{-1} + \sum\limits_{\substack{s'=1\\j^{(s')} > 0}}^V H^{(s'),T}R^{(s'),-1}H^{(s')}\right)x\\
-2\left(P_0^{-1}\mu_0 + \sum\limits_{\substack{s'=1\\j^{(s')} > 0}}^V H^{(s'),T}R^{(s'),-1}z_{j^{(s')}}^{(s')}\right)^Tx\\
+\left(\mu_0^TP_0^{-1}\mu_0 + \sum\limits_{\substack{s'=1\\j^{(s')} > 0}}^V z_{j^{(s')}}^{(s'),T}R^{(s'),-1}z_{j^{(s')}}^{(s')}\right)
\end{multline*}
By substitution of variables $M_J$, $b_J$ and $c_J$ defined in Theorem~\ref{theorem::gm_psibar},
\begin{equation}\label{eq::gm::lemma1::exp_subform}
\Psi = x^TM_Jx - 2b_J^Tx + c_J
\end{equation}
Since $M_J$ is symmetric positive definite (Lemma~\ref{lemma::M_positive_definite}), we can complete the square of the first two terms in Equation~(\ref{eq::gm::lemma1::exp_subform}),
\begin{equation}\label{eq::gm::lemma1::finalpsi}
	\Psi = (x - M_J^{-1}b_J)^TM_J(x-M_J^{-1}b_J) - b_J^TM_J^{-1}b_J + c_J
\end{equation}
Finally, substituting Equation~(\ref{eq::gm::lemma1::finalpsi}) into Equation~(\ref{eq::lemma1::expanded}) and rearranging terms yields the final form of Lemma~\ref{lemma::gm1}.
\end{proof}

\begin{lemma}\label{lemma::gm2}
Under the linear Gaussian assumptions stated in Section~\ref{sec::gm},
\begin{multline*}
	\int\prod\limits_{\substack{s'=1\\j^{(s')} > 0}}^V g^{(s')}(z_{j^{(s')}}^{(s')} | x)p_B(x)dx =\\
	\left[\det(P_0)\det(M_J)\prod\limits_{\substack{s'=1\\j^{(s')} > 0}}^V (2\pi)^{n^{(s')}_z}\det(R^{(s')})\right]^{-\frac{1}{2}}\\
	\exp \left\{-\frac{1}{2} (c_J-b_J^T M_J^{-1} b_J) \right\}
\end{multline*}
\end{lemma}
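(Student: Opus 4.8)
The plan is to build directly on Lemma~\ref{lemma::gm1}, which has already done the algebraic heavy lifting by rewriting the integrand as a product of three factors: a collection of $x$-independent normalization constants, the scalar exponential factor $\exp\{-\frac{1}{2}(c_J - b_J^T M_J^{-1} b_J)\}$, and a single $x$-dependent factor $\exp\{-\frac{1}{2}(x - M_J^{-1} b_J)^T M_J (x - M_J^{-1} b_J)\}$. Since integration is over $x$, the first two factors pull straight through the integral as constants, so the entire problem reduces to evaluating the integral of the last factor.

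First I would recognize that the remaining $x$-dependent factor is the kernel of an unnormalized multivariate Gaussian with mean $M_J^{-1} b_J$ and covariance $M_J^{-1}$. By Lemma~\ref{lemma::M_positive_definite}, $M_J$ is symmetric positive definite, so $M_J^{-1}$ exists and is positive definite; this guarantees both that the quadratic form is a genuine covariance and that the improper integral converges. The standard Gaussian normalization identity then gives
\begin{equation*}
	\int \exp\left\{-\frac{1}{2}(x - M_J^{-1} b_J)^T M_J (x - M_J^{-1} b_J)\right\} dx = (2\pi)^{n_x/2}\det(M_J)^{-1/2},
\end{equation*}
using $\det(M_J^{-1}) = \det(M_J)^{-1}$.

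Next I would substitute this value back in and collect the determinant and $2\pi$ factors from Lemma~\ref{lemma::gm1}. The prefactor there contributes $(2\pi)^{-n_x/2}\det(P_0)^{-1/2}$ together with the product of $[(2\pi)^{n^{(s')}_z}\det(R^{(s')})]^{-1/2}$ terms. The $(2\pi)^{-n_x/2}$ from the prefactor cancels exactly against the $(2\pi)^{n_x/2}$ produced by the Gaussian integral, while the new $\det(M_J)^{-1/2}$ merges with $\det(P_0)^{-1/2}$ to form $[\det(P_0)\det(M_J)]^{-1/2}$. The scalar exponential factor $\exp\{-\frac{1}{2}(c_J - b_J^T M_J^{-1} b_J)\}$ is untouched throughout. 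Assembling these pieces yields precisely the claimed expression.

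This proof is essentially mechanical once Lemma~\ref{lemma::gm1} is in hand, so I do not expect a genuine obstacle; the only point requiring care is the bookkeeping of the $(2\pi)^{n_x}$ cancellation and confirming that positive definiteness of $M_J$ justifies applying the Gaussian normalization identity. Everything else is substitution and regrouping of constants.
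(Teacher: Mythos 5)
Your proposal is correct and follows essentially the same route as the paper: apply Lemma~\ref{lemma::gm1}, pull the $x$-independent factors outside the integral, evaluate the remaining Gaussian kernel via the standard normalization identity (the paper does this with the explicit substitution $y = x - M_J^{-1}b_J$), and cancel the $(2\pi)^{n_x}$ factors to merge $\det(M_J)^{-1/2}$ with $\det(P_0)^{-1/2}$. Your explicit appeal to Lemma~\ref{lemma::M_positive_definite} to justify convergence is a small point of added care that the paper leaves implicit.
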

\begin{proof}
Applying Lemma~\ref{lemma::gm1} to the integrand, the only term dependent on $x$ is the exponential and thus can be taken out of the integral.
Letting $y = x-M_J^{-1}b_J$ and $dy=dx$, the integral simplifies,
\begin{multline*}
\int \exp\left\{-\frac{1}{2}(x-M_J^{-1}b_J)^TM_J(x-M_J^{-1}b_J)\right\} dx = \\
\int \exp\left\{-\frac{1}{2}y^TM_Jy\right\}dy,
\end{multline*}
which under the standard Gaussian integral evaluates to,
\begin{equation}\label{eq::gm_int}
	\int \exp\left\{-\frac{1}{2}y^TM_Jy\right\}dy = \left(2\pi\right)^{n_x/2}\det(M_J)^{-1/2}
\end{equation}
By re-applying Lemma~\ref{lemma::gm1}, carrying over the constants and by using Equation~(\ref{eq::gm_int}) as the solution to the integral, we arrive at the 
final form of Lemma~\ref{lemma::gm2}.
\end{proof}

\begin{lemma}\label{lemma::I_in_truncation}
	Using the premises provided in Theorem~\ref{theorem::l1_dist},
	\begin{equation*}
		I_+ \cap (\mathbb{B}_+ \setminus \mathbb{B}'_+) \neq \emptyset, \hfill
		\forall I_+ \in (\mathcal{F}(\mathbb{L} \cup \mathbb{B}_+) \setminus \mathcal{F}(\mathbb{L} \cup \mathbb{B}'_+)).
	\end{equation*}
\end{lemma}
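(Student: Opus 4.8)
The plan is to argue directly from the definitions of finite-subset collections and set difference; the statement is a purely set-theoretic bookkeeping fact, so no machinery beyond elementary Boolean reasoning on labels is required.

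First I would fix an arbitrary $I_+ \in \mathcal{F}(\mathbb{L} \cup \mathbb{B}_+) \setminus \mathcal{F}(\mathbb{L} \cup \mathbb{B}'_+)$ and translate its two membership conditions into subset statements. Membership in $\mathcal{F}(\mathbb{L} \cup \mathbb{B}_+)$ says $I_+$ is a finite subset of $\mathbb{L} \cup \mathbb{B}_+$, i.e. $I_+ \subseteq \mathbb{L} \cup \mathbb{B}_+$; non-membership in $\mathcal{F}(\mathbb{L} \cup \mathbb{B}'_+)$ says $I_+$ fails to be a subset of $\mathbb{L} \cup \mathbb{B}'_+$, i.e. $I_+ \not\subseteq \mathbb{L} \cup \mathbb{B}'_+$.

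Next I would extract a witness from the failed containment. Since $I_+ \not\subseteq \mathbb{L} \cup \mathbb{B}'_+$, there exists $l_+ \in I_+$ with $l_+ \notin \mathbb{L} \cup \mathbb{B}'_+$, which is equivalent to $l_+ \notin \mathbb{L}$ and $l_+ \notin \mathbb{B}'_+$. Because $l_+ \in I_+ \subseteq \mathbb{L} \cup \mathbb{B}_+$ while $l_+ \notin \mathbb{L}$, the union forces $l_+ \in \mathbb{B}_+$. Combined with $l_+ \notin \mathbb{B}'_+$ this yields $l_+ \in \mathbb{B}_+ \setminus \mathbb{B}'_+$, and since $l_+ \in I_+$ we obtain $l_+ \in I_+ \cap (\mathbb{B}_+ \setminus \mathbb{B}'_+)$, so this intersection is non-empty. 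As $I_+$ was arbitrary, the conclusion holds for every such $I_+$.

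I do not expect any substantive obstacle here; the only point requiring care is correctly reading off the negation of the subset relation (the existence of a single offending element, rather than forming a set complement) and keeping the two label pools $\mathbb{L}$ and $\mathbb{B}_+$ straight. Notably, no disjointness of $\mathbb{L}$ and $\mathbb{B}_+$ is needed, since $l_+ \notin \mathbb{L}$ is delivered immediately by the witness. The reason for isolating this lemma is that it feeds directly into the proof of Theorem~\ref{theorem::l1_dist}: every hypothesis label set lost under truncation must contain at least one truncated newborn label, so $N_{\mathbb{T}_+}(I_+) \geq 1$ and the corresponding error term carries at least one factor of $\epsilon$.
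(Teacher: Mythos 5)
Your proof is correct and matches the paper's argument in substance: the paper proves the same implication by contradiction (assuming the intersection empty and deducing $I_+ \subseteq \mathbb{L} \cup \mathbb{B}'_+$), while you run the identical reasoning in the direct, contrapositive direction by extracting a witness $l_+ \notin \mathbb{L} \cup \mathbb{B}'_+$. This is only a trivial reformulation, so there is nothing to add.
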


\begin{proof}
	Consider $I_+ \in (\mathcal{F}(\mathbb{L} \cup \mathbb{B}_+) \setminus \mathcal{F}(\mathbb{L} \cup \mathbb{B}'_+))$ and assume the opposite is true, $I_+ \cap (\mathbb{B}_+ \setminus \mathbb{B}'_+) = \emptyset$.
	Then $\forall l_+ \in I_+$, $l_+ \notin \mathbb{B}_+ \setminus \mathbb{B}'_+$ (or equivalently $l_+ \in \mathbb{L} \cup \mathbb{B}'_+$), resulting in $I_+ \subseteq \mathbb{L} \cup \mathbb{B}'_+$.
	However this contradicts the statement $I_+ \in (\mathcal{F}(\mathbb{L} \cup \mathbb{B}_+) \setminus \mathcal{F}(\mathbb{L} \cup \mathbb{B}'_+))$ since $I_+ \in \mathcal{F}(\mathbb{L} \cup \mathbb{B}'_+)$.
	Since the assumed premise leads to a contradiction, it follows that $I_+ \cap (\mathbb{B}_+ \setminus \mathbb{B}'_+) \neq \emptyset$.
\end{proof}

\section{Proof of Theorem~\ref{theorem::l1_dist}}\label{sec::appendix::proof_l1_dist}

\begin{proof}
    From \cite[proposition 5]{Vo2015}, the L1-distance between two $\delta$-\ac{GLMB} distributions is given as,
    \begin{equation*}
        |\pi_{\mathbb{H}} - \pi_{\mathbb{H}'}| = \sum\limits_{(I, \xi, I_+, \theta_+)\in \mathbb{H} \setminus \mathbb{H}'} w^{(I, \xi)}w_{Z_+}^{(I, \xi, I_+, \theta_+)}.
    \end{equation*}
    By the distributive property of cartesian products over set differences, $\mathbb{H} - \mathbb{H}'$ is,
    \begin{multline*}
        (\mathcal{F}(\mathbb{L}) \times \mathcal{F}(\mathbb{L} \cup \mathbb{B}_+) \times \Xi \times \Theta_+) \setminus
        (\mathcal{F}(\mathbb{L}) \times \mathcal{F}(\mathbb{L} \cup \mathbb{B}'_+) \times \Xi \times \Theta_+) =\\
        \mathcal{F}(\mathbb{L}) \times (\mathcal{F}(\mathbb{L} \cup \mathbb{B}_+) \setminus \mathcal{F}(\mathbb{L} \cup \mathbb{B}'_+)) \times \Xi \times \Theta_+.
    \end{multline*}
    By Lemma~\ref{lemma::I_in_truncation}, every $I_+ \in (\mathcal{F}(\mathbb{L} \cup \mathbb{B}_+) \setminus \mathcal{F}(\mathbb{L} \cup \mathbb{B}'_+))$ must have at least one label in the truncated set $\mathbb{T}_+ = \mathbb{B}_+ \setminus \mathbb{B}'_+$.
    Equation (\ref{eq::glmb_update::w}) can be rearranged as,
    \begin{multline*}
        w_{Z_+}^{(I, \xi, I_+, \theta_+)} = 
        1_{\Theta_+(I_+)}(\theta_+)
        \left[1 - \bar{P}^{(\xi)}_s \right]^{I-I_+}
        \left[ \bar{P}^{(\xi)}_s\right]^{I \cap I_+}\\
        \left[1 - r_{B,+}\right]^{\mathbb{B}_+ - I_+}
        r_{B, +}^{(\mathbb{B}_+\cap I_+) - \mathbb{T}_+}\\
        \left[\bar{\psi}^{(\xi, \theta_+)}_{Z_+}\right]^{I_+}
        \left[r_{B, +}\right]^{I_+ \cap \mathbb{T}_+},
    \end{multline*}
    to clearly delineate the contribution by labels in the truncated set $\mathbb{T}_+$.
    Note that all of the multiplicands are upper bounded by 1 except $\left[\bar{\psi}^{(\xi, \theta_+)}_{Z_+}\right]^{I_+}$ and $\left[r_{B, +}\right]^{I_+ \cap \mathbb{T}_+}$.
    By construction of the truncated set $\mathbb{T}_+$, it holds that $r_{B, +}(l_+) < \epsilon$ for $l_+ \in \mathbb{T}_+$.
    If $K$ is a positive upper bound, $0 \leq \bar{\psi}^{(\xi, \theta_+)}_{Z_+}(l_+) \leq K, \; \forall I_+ \in \mathcal{F}(\mathbb{L} \cup (\mathbb{B}_+ \setminus \mathbb{B}'_+))$ with $l_+ \in I_+$ and $\theta_+ \in \Theta_+(I_+)$,
    the product is bounded by,
    \begin{equation}\label{eq::w_bound}
        w_{Z_+}^{(I, \xi, I_+, \theta_+)} \leq K^{|I_+|} \epsilon^{N_{\mathbb{T}_+}(I_+)},
    \end{equation}
    where $N_{\mathbb{T}_+}(I_+) = | I_+ \cap (\mathbb{B}_+ \setminus \mathbb{B}'_+)|$ is the number of truncated labels in $I_+$.
    By bounding the normalized prior $w^{(\xi, I)} \leq 1$ and Equation (\ref{eq::w_bound}) into the L1-distance equation \cite[proposition 5]{Vo2014} we reach final expression.
\end{proof}
\section{Proof of Theorem~\ref{theorem::gm_psibar}}\label{sec::appendix::proof_gm_theorem0}

\begin{proof}
Grouping missed detected and detected elements of $J$ in $\psi^J_Z$ results in,
\begin{equation}\label{eq::psiJ_expand}
    \psi^J_Z(x,l_+) = 
        \left[\prod\limits^V_{\substack{s'=1\\j^{(s') = 0}}} (1-p^{(s')}_D)\right]
        \left[\prod\limits^V_{\substack{s'=1\\j^{(s') > 0}}}\frac{p^{(s')}_D g^{(s')}(z^{(s')}_{j^{(s')}}|x)}{\kappa^{(s)}(z^{(s')}_{j^{(s')}})}\right].
\end{equation}
By substituting into Equation~(\ref{eq::psi_bar}) and removing elements from the integrand that are not a function of $x$,
\begin{multline*}
    \bar{\psi}^J_Z(l_+) = 
    \left[\prod\limits^V_{\substack{s'=1\\j^{(s') = 0}}} (1-p^{(s')}_D)\right]
    \left[\prod\limits^V_{\substack{s'=1\\j^{(s') > 0}}}\frac{p^{(s')}_D}{\kappa^{(s)}(z^{(s')}_{j^{(s')}})}\right]\\
    \int \prod\limits^V_{\substack{s'=1\\j^{(s') > 0}}}g^{(s')}(z^{(s')}_{j^{(s')}}|x) p_{B}(x)dx.
\end{multline*}
Applying Lemma~\ref{lemma::gm2} results in the final expression.
\end{proof}
\section{Proof of Theorem~\ref{theorem::gm_sample_distr}}\label{sec::appendix::proof_gm_theorem1}

\begin{proof}
If $j^{(s)}>0$, then by substitution of Equation~(\ref{eq::gauss_psibar}) into Equation~(\ref{eq::cdn_likelihood}) and combining values that are not a function of sensor $s$ into the normalizing constant results in the final expression when $j^{(s)} > 0$ in Theorem~\ref{theorem::gm_sample_distr}.

If $j^{(s)}=0$, Equation~(\ref{eq::cdn_likelihood}) reduces as follows,
\begin{equation*}
	p(j^{(s)} | J^{-s}) \propto \left(1-p^{(s)}_D\right)\bar{\psi}^{J^{-s}}_Z(l_+).
\end{equation*}
Since, $\left(1-p^{(s)}_D\right)$ is not a function of $x$ and by notation, $\left(1 - r_U(z^{(s)}_{j^{(s)}})\right) = 1$ when $j^{(s)} = 0$.
Substituting the expression from Equation~(\ref{eq::gauss_psibar}) over the domain $J^{-s}$, and combining values that are not a function of sensor $s$ into the normalizing constant results in the final expression for $j^{(s)} = 0$ in Theorem~\ref{theorem::gm_sample_distr}.
Note that, although $M_{J^{-s}}$ and $\psi_{J^{-s}}$ are not a function of sensor $s$, they cannot be combined into the normalizing constant as it is not easy to separate these terms within the determinant or inverse for the $j^{(s)} > 0$ case.
Similarly, although $c_J$ can be factored as,
\begin{equation*}
	c_J = c_{J^{-s}} + 1_{\mathbb{J}}(j^{(s)}) z_{j^{(s)}}^{(s)T}R^{(s),-1}z_{j^{(s)}}^{(s)},
\end{equation*}
where,
\begin{equation*}
	c_{J^{-s}} =  \mu_0^TP^{-1}_0\mu_0 + \sum\limits_{s' \in J^{-s}}z_{j^{(s')}}^{(s'),T}R^{(s'),-1}z_{j^{(s')}}^{(s')},
\end{equation*}
in practice, by not including $c_{J^{-s}}$ in $\Phi_J$, it may result in numerical instabilities within the exponential function and thus not recommended to be combined into the normalizing constant.
\end{proof}
\section{Proof of Theorem~\ref{theorem::gm_spatial_distr}}\label{sec::appendix::proof_gm_spatial_distr}

\begin{proof}
By Equation~(\ref{eq::psiJ_expand}), the only term that is a factor of $x$ is $g^{(s')}(z^{(s')}_{j^{(s')}}|x)$.
Because of this, all terms except,
	$\prod\limits^V_{\substack{s=1\\j^{(s')} > 0}}g^{(s')}(z^{(s')}_{j^{(s')}}|x)$
can be factored out of the denominator of Equation~(\ref{eq::spatial_distr}) and cancels with the terms in the numerator resulting in,
\begin{equation*}
	p_B \left( x, l_+ | Z_J\right) = \frac{p_B(x, l) \prod\limits^V_{\substack{s=1\\j^{(s')} > 0}}g^{(s')}(z^{(s')}_{j^{(s')}}|x)}{\bigg\langle p_B, \prod\limits^V_{\substack{s=1\\j^{(s')} > 0}}g^{(s')}(z^{(s')}_{j^{(s')}}|\cdot)\bigg\rangle}
\end{equation*}
Using Lemmas~\ref{lemma::gm1} and~\ref{lemma::gm2} for the numerator and denominator respectively, the posterior can be simplified as,
\begin{multline*}
	p_B \left( x, l_+ | Z_J\right) = (2\pi)^{-n_x/2}\det(M_J)^{1/2}\\
	\times \exp \left\{ -\frac{1}{2} (x - M_J^{-1}b_J)^TM_J(x-M_J^{-1}b_J)\right\}.
\end{multline*}
Letting $\mu' = M_J^{-1}b_J$ and $P'=M_J^{-1}$ results in a Gaussian of the form $p_B \left( x, l_+ | Z_J\right) = \mathcal{N}(x; \mu', P')$.
Finally, for linear Gaussian system, the posterior density predicted to the next time step is given by \cite{Vo2015},
\begin{equation*}
	p_{B,+} \left( x, l_+ | Z_J\right) = \mathcal{N}(x; F\mu', FP'F^T + Q)
\end{equation*}
which is the final Gaussian density expression in Theorem~\ref{theorem::gm_spatial_distr}.
\end{proof}

\bibliographystyle{IEEEtran}
\bibliography{IEEEabrv, sections/labeled_msjab.bib}

%

%
%
%

\end{document}